\documentclass[11pt]{article}

\usepackage[margin=1in]{geometry}
\usepackage{amsmath,amsthm,amssymb,mathrsfs,mathtools}
\usepackage{tabularx}
\usepackage{microtype}
\usepackage{booktabs}
\usepackage{enumitem}
\usepackage{xcolor}
\usepackage[unicode,breaklinks=true]{hyperref}
\usepackage{graphicx}
\usepackage{placeins}
\usepackage{float}

\usepackage{algorithm}
\usepackage{algpseudocode}

\usepackage{tikz}
\usetikzlibrary{arrows.meta,positioning,shapes.geometric}

\usepackage{listings}
\newtheorem{theorem}{Theorem}[section]
\newtheorem{lemma}[theorem]{Lemma}

\newtheorem{definition}[theorem]{Definition}
\newtheorem{remark}[theorem]{Remark}

\newcommand{\calL}{\mathcal{L}}
\newcommand{\calG}{\mathcal{G}}
\newcommand{\calH}{\mathcal{H}}
\newcommand{\calR}{\mathcal{R}}

\newcommand{\eps}{\varepsilon}
\newcommand{\Tr}{\operatorname{tr}}
\newcommand{\Der}{\Rightarrow^{*}}

\newcommand{\BMM}{\textsc{BMM}}
\newcommand{\OMv}{\textsc{OMv}}
\newcommand{\Dyck}{\textsc{Dyck}}

\newcommand{\bigO}{\mathcal{O}}

\title{%
Grammar-Constrained (CFL) Reachability:\\
Subcubic Preprocessing, Indexing Trade-offs,\\
and Structured Decoding Semantics%
}
\author{%
Faruk Alpay\thanks{Department of Computer Engineering, Bahcesehir University. \texttt{faruk.alpay@bahcesehir.edu.tr}}%
\and
Levent Sarioglu\thanks{Department of Computer Engineering, Bahcesehir University. \texttt{levent.sarioglu@bahcesehir.edu.tr}}%
}
\date{}

\begin{document}
\maketitle

\begin{abstract}
We study grammar-constrained reachability (CFL reachability / context-free path queries) in directed edge-labeled graphs, where a pair $(s,t)$ is accepted iff there exists a path whose trace belongs to $\calL_{\calG}(S)$ for a fixed context-free grammar $\calG$.
We tighten the baseline cubic-time saturation index for Chomsky normal form (CNF) grammars and then add four extensions:
(i) a nontrivial subcubic preprocessing class (linear grammars on sparse graphs) with explicit proof and index construction;
(ii) an alternative index that changes trade-offs via \emph{shared witness DAG compression} and optional dynamic support on a restricted class (bidirected Dyck reachability);
(iii) a new decomposition lemma (terminal-anchored normalization plus propagation accounting) used to derive the subcubic bound; and
(iv) a formal application shift to \emph{single-path semantics} (shortest accepted path), which changes the algorithmic objective and motivates a distance-aware index for structured decoding.
We validate the practical relevance of the linear grammar class empirically: an analysis of $9{,}558$ real-world JSON schemas from JSONSchemaBench~\cite{jsonschemabench2025} shows that $8.4\%$ fall into the strictly linear class under our conversion, with non-linearity overwhelmingly attributable to variable-length array types---a finding that informs when the subcubic regime applies in grammar-constrained neural decoding.
\end{abstract}

\section*{Executive Summary}
This manuscript presents a concise algorithms/data-structures treatment of validation-gated reachability (CFL reachability / context-free path queries).
It (a) states the problem and the standard saturation index precisely under a CNF assumption, (b) establishes a provably better preprocessing bound for a meaningful grammar/graph class (linear CFGs on sparse graphs), (c) defines a second index with different trade-offs (compressed witnesses and optional dynamic edge updates on a restricted Dyck/bidirected class), (d) introduces a formal application variant (single-path semantics: shortest accepted path) that changes the algorithmic objective and motivates a distance-aware index, and (e) provides an empirical grammar-class census over $9{,}558$ real-world JSON schemas to quantify the prevalence of the linear subclass in practice.
Side-by-side tables of preprocessing/space/query/witness/update costs, pseudocode for all indices, and proof sketches for the key claims are provided.

\section{Problem Statement and Preliminaries}

\subsection{Labeled graphs}
\begin{definition}[Labeled directed graph]
A labeled directed graph is $G=(V,E,\Sigma,\lambda)$ where $V$ is a finite set of vertices, $E\subseteq V\times V$ is a set of directed edges, $\Sigma$ is a finite alphabet, and $\lambda:E\to\Sigma$ labels edges.
Let $n\stackrel{\text{def}}{=} |V|$ and $m\stackrel{\text{def}}{=}|E|$.
\end{definition}

A path $\pi$ is a sequence of edges $(v_0,v_1)\cdots(v_{k-1},v_k)$.
Its \emph{trace} is $\Tr(\pi)=\lambda(v_0,v_1)\cdots\lambda(v_{k-1},v_k)\in\Sigma^*$.

\subsection{CFGs and CNF}
\begin{definition}[CFG]
A context-free grammar is $\calG=(N,\Sigma,P,S)$ where $N$ is a finite set of nonterminals, $\Sigma$ terminals, $P$ productions, and $S\in N$ start symbol.
Write $\calL_{\calG}(A)$ for the language generated by $A$.
Let $|N|$ be the number of nonterminals and $|P|$ the number of productions.
\end{definition}

\begin{definition}[CNF]
A CFG is in Chomsky normal form if each production is $A\to BC$, $A\to a$, or (optionally) $S\to\eps$, with $A,B,C\in N$ and $a\in\Sigma$.
\end{definition}

\subsection{Validation-gated reachability}
\begin{definition}[Validation-gated instance and query]
A validation-gated instance is $\calH=(G,\calG)$ with common alphabet $\Sigma$.
For $s,t\in V$, define the (boolean) query
\[
\mathrm{CFR}_{\calH}(s,t) \;=\; \exists \text{ path }\pi:s\leadsto t \text{ such that }\Tr(\pi)\in\calL_{\calG}(S).
\]
\end{definition}

\begin{definition}[Nonterminal relations]
For each $A\in N$, define $\calR_A\subseteq V\times V$ by
\[
(u,v)\in\calR_A \iff \exists \pi:u\leadsto v \text{ with }\Tr(\pi)\in\calL_{\calG}(A).
\]
Then $\mathrm{CFR}_{\calH}(s,t)$ holds iff $(s,t)\in\calR_S$.
\end{definition}

\paragraph{Baseline objective.}
Build an \emph{index} supporting $O(1)$ membership queries for $\calR_S$ and returning a witness path (explicit or compressed).

\section{Baseline Saturation Index for CNF Grammars}

\subsection{Index and witnesses}
\begin{definition}[Saturation index \textsc{SatIndex}]
Assume $\calG$ is CNF.
\textsc{SatIndex} stores:
(i) Boolean matrices $M_A\in\{0,1\}^{n\times n}$ encoding $\calR_A$;
(ii) a witness table $W[A,u,v]$ for each true entry, sufficient to reconstruct a witness.
\end{definition}

Witness records use three forms:
\[
\langle \textsf{term},(u,v)\rangle,\quad
\langle \textsf{eps}\rangle,\quad
\langle \textsf{bin},B,C,m\rangle
\]
meaning $A\to a$ via edge $(u,v)$; $S\to\eps$ on the diagonal; or $A\to BC$ with split vertex $m$.

\subsection{Construction}
Algorithm~\ref{alg:satindex} is the standard worklist saturation (as in classic CFL-reachability formulations).

\begin{algorithm}[H]
\caption{\textsc{SatIndex-Build}$(G,\calG)$ (CNF)}
\label{alg:satindex}
\begin{algorithmic}[1]
\Require $G=(V,E,\Sigma,\lambda)$, $\calG=(N,\Sigma,P,S)$ in CNF
\Ensure $\{M_A\}_{A\in N}$ and witnesses $W$
\State Initialize all $M_A\gets 0$ and $W\gets\bot$; queue $Q\gets\emptyset$
\ForAll{$A\to a\in P$}
  \ForAll{$(u,v)\in E$ with $\lambda(u,v)=a$}
    \If{$M_A[u,v]=0$}
      \State $M_A[u,v]\gets 1$; $W[A,u,v]\gets\langle\textsf{term},(u,v)\rangle$
      \State Enqueue $(A,u,v)$
    \EndIf
  \EndFor
\EndFor
\If{$S\to\eps\in P$}
  \ForAll{$u\in V$}
    \If{$M_S[u,u]=0$}
      \State $M_S[u,u]\gets 1$; $W[S,u,u]\gets\langle\textsf{eps}\rangle$
      \State Enqueue $(S,u,u)$
    \EndIf
  \EndFor
\EndIf
\While{$Q\neq\emptyset$}
  \State Dequeue $(X,i,j)$
  \ForAll{$A\to BC\in P$}
    \If{$X=B$}
      \ForAll{$k\in V$ with $M_C[j,k]=1$}
        \If{$M_A[i,k]=0$} \State $M_A[i,k]\gets1$; $W[A,i,k]\gets\langle\textsf{bin},B,C,j\rangle$; Enqueue $(A,i,k)$ \EndIf
      \EndFor
    \EndIf
    \If{$X=C$}
      \ForAll{$h\in V$ with $M_B[h,i]=1$}
        \If{$M_A[h,j]=0$} \State $M_A[h,j]\gets1$; $W[A,h,j]\gets\langle\textsf{bin},B,C,i\rangle$; Enqueue $(A,h,j)$ \EndIf
      \EndFor
    \EndIf
  \EndFor
\EndWhile
\State \Return $M,W$
\end{algorithmic}
\end{algorithm}

\subsection{Correctness and complexity}
\begin{theorem}[Correctness of \textsc{SatIndex}]
\label{thm:sat-correct}
After Algorithm~\ref{alg:satindex}, for all $A\in N$ and $u,v\in V$,
\[
M_A[u,v]=1 \iff (u,v)\in\calR_A.
\]
\end{theorem}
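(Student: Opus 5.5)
We prove the two directions separately. Soundness ($\Rightarrow$) follows from an invariant maintained by the algorithm. Completeness ($\Leftarrow$) is by induction on derivation length, combined with an argument that every set entry is eventually processed from the queue. Throughout, we use that $M_A[u,v]$ is only ever changed from $0$ to $1$, and that each triple $(A,u,v)$ is enqueued at most once (exactly when its bit flips). Since there are at most $|N|n^2$ triples, the while loop terminates.

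\textbf{Soundness.} We show by induction on the order in which bits are set that the following invariant holds: whenever $M_A[u,v]=1$, the witness $W[A,u,v]$ certifies a path $u\leadsto v$ whose trace lies in $\calL_{\calG}(A)$. There are three cases.
\begin{itemize}
\item For $\langle\textsf{term},(u,v)\rangle$, the edge $(u,v)$ has label $a$ and $A\to a\in P$, so its trace $a$ lies in $\calL_{\calG}(A)$.
\item For $\langle\textsf{eps}\rangle$, the empty path at $u$ has trace $\eps\in\calL_{\calG}(S)$.
\item For $\langle\textsf{bin},B,C,m\rangle$, the bit was set while processing a dequeued triple. At that moment both $M_B[u,m]=1$ and $M_C[m,v]=1$ held, and both were set strictly earlier. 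By the induction hypothesis there are paths $\pi_1:u\leadsto m$ with $\Tr(\pi_1)\in\calL_{\calG}(B)$ and $\pi_2:m\leadsto v$ with $\Tr(\pi_2)\in\calL_{\calG}(C)$. Their concatenation has trace in $\calL_{\calG}(A)$ because $A\to BC\in P$.
\end{itemize}
Hence $(u,v)\in\calR_A$.

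\textbf{Completeness.} Suppose $(u,v)\in\calR_A$ via a path $\pi$ and a derivation $A\Der \Tr(\pi)$. We induct on the length of the derivation.
\begin{itemize}
\item If the first production is $A\to a$, then $\pi$ is a single edge labeled $a$ and is set during initialization.
\item If it is $S\to\eps$, then $u=v$. In CNF $S$ cannot occur on a right-hand side when $S\to\eps$ is present, so this case only arises at the top level, where the diagonal loop handles it.
\item If it is $A\to BC$, the trace splits as $w_1w_2$ with $B\Der w_1$ and $C\Der w_2$, each by a shorter derivation. Because CNF binary rules have nonterminals other than $S$ on the right, and such nonterminals derive only nonempty strings, both $w_i$ are nonempty. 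So $\pi$ splits at some vertex $m$ into $\pi_1:u\leadsto m$ and $\pi_2:m\leadsto v$. By the induction hypothesis, $M_B[u,m]=1$ and $M_C[m,v]=1$ at termination.
\end{itemize}

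\textbf{Main obstacle.} The $A\to BC$ case needs a scheduling argument: the combination of $(B,u,m)$ and $(C,m,v)$ must actually be examined. Consider whichever of the two triples was dequeued later; it is dequeued because every set triple is enqueued and the queue empties.
\begin{itemize}
\item If the later one is $(B,u,m)$, the branch $X=B$ scans all $k$ with $M_C[m,k]=1$. This includes $k=v$, since $(C,m,v)$ was already set.
\item If the later one is $(C,m,v)$, the branch $X=C$ scans all $h$ with $M_B[h,m]=1$. This includes $h=u$.
\end{itemize}
In either case $M_A[u,v]$ is set, if it was not already. This argument requires that bits are set at enqueue time, so that the entry set earlier is visible when the later triple is scanned. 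One must also handle the degenerate case where both are the same triple (possible when $B=C$ and $u=m=v$). Then the matrix bit is already $1$ at dequeue, so the single scan succeeds.

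Combining both directions gives $M_A[u,v]=1\iff(u,v)\in\calR_A$.
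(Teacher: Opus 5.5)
Your proof is correct and follows the paper's own argument. Soundness is by induction on insertion order over the three witness forms. Completeness is by induction on the derivation: split the path at the vertex after $|w_1|$ edges, and $A\to BC$ fires when the later of the two entries is dequeued, which you make more careful by noting that bits are set at enqueue time and by handling the $B=C$, $u=m=v$ coincidence. One small remark: your claim that both $w_i$ are nonempty assumes $S$ never appears on a right-hand side, which the paper's CNF definition does not state, but the proof does not need it, since the split at $v_{|w_1|}$ works even when one part is empty and the diagonal initialization sets $M_S[u,u]$ for every $u$.
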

\begin{proof}
We prove soundness and completeness.

\emph{Soundness.} Consider the first time an entry $M_A[u,v]$ is set to $1$.
If it is created by a rule $A\to a$ and an edge $(u,v)$ with $\lambda(u,v)=a$, then the length-one path $(u,v)$ witnesses $(u,v)\in\calR_A$.
If it is created by the optional rule $S\to\eps$ on the diagonal, then the empty path at $u$ witnesses $(u,u)\in\calR_S$.
Otherwise, $M_A[u,v]$ is created by some rule $A\to BC$ and a split vertex $m$ with $M_B[u,m]=1$ and $M_C[m,v]=1$ already present.
By induction on the time of insertion, there exist paths $\pi_1:u\leadsto m$ and $\pi_2:m\leadsto v$ with traces in $\calL_{\calG}(B)$ and $\calL_{\calG}(C)$, respectively.
Concatenating yields a path $\pi=\pi_1\cdot\pi_2$ from $u$ to $v$ whose trace lies in $\calL_{\calG}(A)$ by the rule $A\to BC$.
Thus every true entry corresponds to some valid witness path, so $M_A[u,v]=1$ implies $(u,v)\in\calR_A$.

\emph{Completeness.} Suppose $(u,v)\in\calR_A$.
Then there is a path $\pi:u=v_0\leadsto v_k=v$ with trace $w\in\calL_{\calG}(A)$.
Fix a CNF derivation tree for $A\Der w$ and induct on its height.
If the height is $1$, then either $A\to a$ with $w=a$ or (only for $A=S$) $A\to\eps$ with $w=\eps$; the initialization step inserts the corresponding entry.
Otherwise, the root rule is $A\to BC$ and $w=w_1w_2$ where $B\Der w_1$ and $C\Der w_2$.
Let $|w_1|=\ell$ and let $m=v_\ell$ be the vertex on $\pi$ after the first $\ell$ edges; then the prefix path $u\leadsto m$ has trace $w_1$ and the suffix path $m\leadsto v$ has trace $w_2$.
By the induction hypothesis, the algorithm eventually inserts $M_B[u,m]=1$ and $M_C[m,v]=1$.
When the later of these two entries is dequeued, the rule $A\to BC$ is applied and $M_A[u,v]$ is inserted.
Since the queue runs to exhaustion, every such $(u,v)$ is eventually added, so $(u,v)\in\calR_A$ implies $M_A[u,v]=1$.

Combining soundness and completeness proves the claim.
\end{proof}

\begin{theorem}[Baseline preprocessing/space/query bounds]
\label{thm:sat-bounds}
Algorithm~\ref{alg:satindex} runs in $\bigO(|P|\cdot n^3)$ time and uses $\bigO(|N|\cdot n^2)$ space for matrices plus $\bigO(|N|\cdot n^2)$ witness records.
Queries are answered in $O(1)$ by reading $M_S[s,t]$.
\end{theorem}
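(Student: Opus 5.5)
The argument is an amortized counting over queue events. The key invariant is that each triple $(X,i,j)$ is enqueued at most once. This holds because enqueueing happens only at the moment some $M_X[i,j]$ flips from $0$ to $1$, and entries never revert. So the total number of dequeues is at most $|N|\cdot n^2$.

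The initialization costs are easy to bound. The terminal loop touches each rule $A\to a$ together with the edges labeled $a$, which costs at most $\bigO(|P|\cdot m)\subseteq\bigO(|P|\cdot n^2)$; scanning $E$ by label makes this even cheaper. The $\eps$ loop costs $\bigO(n)$.

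The main cost is the while-loop. Dequeuing $(X,i,j)$ iterates over all binary rules $A\to BC$, so each rule is inspected once per dequeue. If $X=B$, the algorithm scans row $j$ of $M_C$, which takes $\bigO(n)$. If $X=C$, it scans column $i$ of $M_B$, also $\bigO(n)$. Each inner step does $O(1)$ work: a test, an optional set, a witness write, and an enqueue.

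A naive bound would be $|N|n^2\cdot|P|\cdot n$, which carries an extra factor $|N|$. The careful step, which I expect to be the main obstacle, is the accounting that removes it. I charge the work per rule rather than per dequeue. A fixed rule $A\to BC$ performs an $\bigO(n)$ scan only for dequeued triples whose nonterminal is $B$ or $C$. There are at most $2n^2$ such triples, since each of $(B,i,j)$ and $(C,i,j)$ is dequeued at most once. Hence each rule contributes $\bigO(n^3)$, and summing over $P$ gives $\bigO(|P|\cdot n^3)$.

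To make this precise, I replace the loop over all of $P$ with precomputed per-nonterminal lists of the rules in which $X$ occurs as left or right child. This is the standard implementation, and it changes no semantics. Building these lists costs $\bigO(|P|)$.

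Space is immediate. There are $|N|$ Boolean $n\times n$ matrices. The witness table has at most one record of constant size per true entry, since $W[A,u,v]$ is written only when $M_A[u,v]$ first becomes $1$. The queue holds at most $|N|n^2$ triples. Altogether this is $\bigO(|N|\cdot n^2)$ space.

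Finally, by Theorem~\ref{thm:sat-correct}, $\mathrm{CFR}_{\calH}(s,t)$ holds iff $M_S[s,t]=1$. A query therefore reads a single matrix entry, which takes $O(1)$ time in the word-RAM model.
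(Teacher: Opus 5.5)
Your argument is correct and has the same skeleton as the paper's: at most $|N|n^2$ dequeues because entries only flip $0\to 1$; fix a binary rule, show it is charged $O(n^3)$, and sum over $P$. Where you differ is the unit you charge work to.

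The paper charges to the triple set $T_{A\to BC}=\{(i,j,k): M_B[i,j]=1,\ M_C[j,k]=1\}$. It observes that each triple is inspected at most twice (once when $(B,i,j)$ is dequeued, once when $(C,j,k)$ is) and that $|T_{A\to BC}|\le n^3$. You instead charge one $O(n)$ row or column scan to each of the at most $2n^2$ dequeues of a $B$- or $C$-entry.

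The paper's accounting is finer, giving the output-sensitive bound $O\bigl(\sum_{A\to BC}|T_{A\to BC}|\bigr)$. That only works if ``for all $k$ with $M_C[j,k]=1$'' iterates over lists of true entries, which the paper tacitly assumes via its row/column lists. Yours needs no such assumption and holds verbatim for plain bit-matrix scans as well as for lists.

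You also catch something the paper's proof passes over. As written, each dequeue loops over all of $P$ just to test $X=B$ and $X=C$, which costs $O(|N|\cdot|P|\cdot n^2)$ and can exceed $|P|n^3$ when $|N|>n$. Your per-nonterminal occurrence lists, built in $O(|P|)$ time, remove that term without changing the algorithm's behavior.

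Finally, bounding initialization by $O(|P|m)\subseteq O(|P|n^2)$ is a more direct justification than the paper's remark that it is ``dominated by the main saturation loop.''
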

\begin{proof}
Let $B$ be the set of binary productions $A\to BC$.
The initialization over terminal rules inserts at most one entry per pair $(A,u,v)$, so it costs $O(|P|m)$ time by scanning labeled edges (or $O(m)$ after grouping edges by label), and the optional $S\to\eps$ initialization costs $O(n)$.
Both are dominated by the main saturation loop.

Each queue element $(X,i,j)$ corresponds to a newly inserted matrix entry, hence there are at most $|N|n^2$ dequeues.
Fix a binary rule $A\to BC\in B$.
Consider the set of triples
\[
T_{A\to BC}=\{(i,j,k)\in V^3 : M_B[i,j]=1 \text{ and } M_C[j,k]=1\}.
\]
When $(B,i,j)$ is dequeued, the algorithm iterates over all $k$ with $M_C[j,k]=1$ and inspects exactly the triples in $T_{A\to BC}$ whose left part is $(i,j)$; when $(C,j,k)$ is dequeued, it iterates over all $h$ with $M_B[h,j]=1$ and inspects the triples whose right part is $(j,k)$.
Thus each triple in $T_{A\to BC}$ is processed at most twice (once from the left and once from the right).
Since $|T_{A\to BC}|\le n^3$, the total work charged to rule $A\to BC$ is $O(n^3)$, and summing over all binary rules yields $O(|P|n^3)$ time.
The constant-time checks/updates ($M_A$ and witness insertion) are subsumed in this bound.

For space, each $A\in N$ stores a boolean matrix $M_A\in\{0,1\}^{n\times n}$ and a constant-size witness per true entry, for $O(|N|n^2)$ space each; the queue and any row/column lists fit within the same asymptotic bound.
Finally, queries are answered in $O(1)$ by reading $M_S[s,t]$.
\end{proof}

\subsection{Witness extraction}
\begin{algorithm}[H]
\caption{\textsc{ExtractPath}$(A,u,v,W)$ (explicit path)}
\label{alg:extract}
\begin{algorithmic}[1]
\Require $W[A,u,v]\neq\bot$
\If{$W[A,u,v]=\langle\textsf{term},(u,v)\rangle$} \State \Return the single edge $(u,v)$ \EndIf
\If{$W[A,u,v]=\langle\textsf{eps}\rangle$} \State \Return empty path at $u$ \EndIf
\If{$W[A,u,v]=\langle\textsf{bin},B,C,m\rangle$}
  \State $\pi_1\gets\textsc{ExtractPath}(B,u,m,W)$
  \State $\pi_2\gets\textsc{ExtractPath}(C,m,v,W)$
  \State \Return $\pi_1\cdot \pi_2$
\EndIf
\end{algorithmic}
\end{algorithm}

\begin{remark}[Witness size]
The explicit witness length can be $\Theta(m)$ or larger depending on the graph; the index guarantees only output-linear extraction in the explicit representation.
Section~\ref{sec:swd} gives compressed witnesses with different trade-offs.
\end{remark}

\section{A Subcubic Preprocessing Class: Linear Grammars on Sparse Graphs}

This section implements extension (1): a nontrivial grammar/graph class with provably subcubic preprocessing, assuming sparse graphs $m=\bigO(n)$.

\subsection{Linear grammars and terminal-anchored normal form}
CNF is not a linear-normal-form; therefore I separate the \emph{baseline CNF assumption} (Sections 1--2) from the \emph{linear-grammar subclass} used for the improved bound.

\begin{definition}[Linear CFG]
A CFG is \emph{linear} if each production has at most one nonterminal on the right-hand side.
Equivalently, productions are of the form $A\to xBy$ or $A\to x$ where $x,y\in\Sigma^*$ and $A,B\in N$.
\end{definition}

\begin{definition}[Terminal-anchored linear normal form (TALNF)]
A linear grammar is in TALNF if each production is one of:
\[
A\to a,\quad A\to aB,\quad A\to Ba,\quad S\to\eps,
\]
where $a\in\Sigma$ and $A,B\in N$.
\end{definition}

\begin{lemma}[Terminal-anchored normalization]
\label{lem:talnf}
Every linear CFG $\calG$ can be transformed into an equivalent TALNF grammar $\calG'$ in time $\bigO(|P|)$ and size $\bigO(|P|)$, preserving the generated language.
\end{lemma}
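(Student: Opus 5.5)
The plan is a pipeline of four language-preserving rewrites, each checked separately: (1) a fresh start symbol, (2) $\eps$-elimination, (3) splitting long right-hand sides into terminal-anchored rules, and (4) unit-rule elimination. Throughout, I measure $|P|$ as the total length of all right-hand sides, since a single rule $A\to xBy$ with long $x,y$ cannot become $\bigO(1)$ TALNF rules. Each step must be shown to preserve $\calL_{\calG}(S)$ and to increase size by at most a constant factor (except possibly step (4), discussed below).

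\emph{Step (1).} If $S$ occurs on some right-hand side, add a fresh $S_0$ together with copies $S_0\to\alpha$ of every rule $S\to\alpha$, and make $S_0$ the start symbol. This is linear and guarantees that an eventual rule $S_0\to\eps$ never feeds into another derivation.

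\emph{Step (2).} Compute the nullable set $\mathrm{Null}=\{A: A\Der\eps\}$ by the standard linear-time worklist: each rule $A\to xBy$ with $xy=\eps$ waits on $B$, and rules $A\to\eps$ seed the worklist. Linearity then makes elimination cheap. For each rule $A\to xBy$ with $B\in\mathrm{Null}$, add $A\to xy$. Then delete every rule $A\to\eps$, and add $S_0\to\eps$ if $S_0\in\mathrm{Null}$. Because every right-hand side contains at most one nonterminal, each rule spawns at most one variant, so the grammar at most doubles. This is where linearity is essential; for general CFGs this step is exponential without prior binarization. Correctness is the usual induction on derivation length: a nonempty word's derivation is simulated by contracting each nullable subderivation into the corresponding shortened rule.

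\emph{Step (3).} Split each remaining rule, sweeping terminals outward from the anchored nonterminal.
\begin{itemize}
\item For $A\to a_1\cdots a_k\,B\,b_1\cdots b_\ell$, introduce fresh nonterminals and emit the chain $A\to a_1A_1,\ A_1\to a_2A_2,\dots$. The left chain ends at a nonterminal $Y$ that generates $B\,b_1\cdots b_\ell$.
\item Then peel the right terminals: $Y\to Y_1 b_\ell,\ Y_1\to Y_2 b_{\ell-1},\dots$, until the last link produces $B$.
\item When $k=0$ or $\ell=0$ the corresponding chain is empty, and the final link may be a unit rule $Z\to B$. A purely terminal rule $A\to a_1\cdots a_k$ becomes the chain $A\to a_1A_1,\dots, A_{k-1}\to a_k$.
\end{itemize}
The number of fresh symbols and rules is proportional to the rule length, so this step is linear. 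Language preservation is immediate, because each fresh nonterminal has exactly one rule and therefore derives exactly the intended suffix.

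\emph{Step (4), the main obstacle.} Steps (2) and (3) both leave unit rules $A\to B$, which TALNF forbids. The textbook fix replaces each unit pair $(A,B)$ with $A\Der B$ by copies of $B$'s non-unit rules. That costs $\bigO(|N|\cdot|P|)$, not $\bigO(|P|)$, so this is where I expect the stated linear bound to be at risk. My first attempt is as follows.
\begin{itemize}
\item Collapse strongly connected components of the unit graph, since nonterminals in one SCC generate the same language and can be merged in linear time.
\item Process the resulting DAG in reverse topological order, so each $A$ inherits the non-unit rules of its unit successors.
\item Try to avoid the copies by making the anchoring terminal carry the unit step. Every unit rule produced in step (3) arises as the final link of a chain that is already anchored by a terminal. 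So the target $B$ can be redirected into the preceding link, turning $Z\to aZ'$ with $Z'\to B$ into $Z\to aB$, with no copying.
\end{itemize}
The redirection handles every unit rule created in step (3). It does not handle unit rules already present in the input grammar, or those created in step (2) by shortening, e.g.\ $A\to aB$ with $a$ nullable is impossible but $A\to B$ from $A\to BC$-free linear forms can arise. For those, I would use the DAG-based closure and bound the output size by the sum of sizes of the inherited rule sets. If that sum cannot be shown to be $\bigO(|P|)$, I would state the lemma's bound as $\bigO(|N|\cdot|P|)$ for grammars with unit chains and $\bigO(|P|)$ for unit-free input. This still suffices for the later subcubic result, since the grammar is fixed.
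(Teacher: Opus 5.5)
\textbf{The gap you flag is real, and the lemma as stated cannot be proved.} Apart from the order of steps, your pipeline is the paper's. The paper also removes $\eps$-rules and then splits $A\to xBy$ into a prefix chain ending in $A_{p-1}\to a_pB^{(q)}$ on top of a suffix chain $B^{(1)}\to Bb_1,\dots$. Because it removes unit rules \emph{before} splitting, its chains always end in an anchored link, which is exactly your redirection trick. One correction: your step (2) creates no unit rules. Deleting the only nonterminal of a linear rule leaves a terminal-only rule, so the only unit rules to handle are those in the input.

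The gap is the one you identify: you do not prove the $\bigO(|P|)$ bound when unit rules are present. The paper does not close it either. It handles this step with the bare assertion that ``the standard unit-closure procedure'' takes $\bigO(|P|)$ time with constant-factor growth. That assertion is false, and so is the lemma's size bound.

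Take $\Sigma=\{a_1,\dots,a_n,b_1,\dots,b_n\}$ and the linear grammar with rules $S\to a_iA_i$ $(1\le i\le n)$, $A_i\to A_{i+1}$ $(i<n)$, and $A_j\to b_j$ $(1\le j\le n)$. Then $|P|=\Theta(n)$ and $\calL_{\calG}(S)=\{a_ib_j : i\le j\}$.
\begin{itemize}
\item \emph{The standard closure is quadratic.} It gives each $A_i$ the $n-i+1$ rules $A_i\to b_j$ with $j\ge i$, so $\Theta(n^2)$ rules in total.
\item \emph{Every TALNF grammar needs $\Omega(n\log n)$ rules.} Such a grammar cannot contain $S\to\eps$, and every other TALNF rule emits exactly one terminal. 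So each word $a_ib_j$ is derived either by $S\to a_iX,\ X\to b_j$ or by $S\to Xb_j,\ X\to a_i$.
\item \emph{Reduction to a biclique cover.} Each nonterminal $X$ therefore yields at most two rectangles $I\times J\subseteq\{(i,j):i\le j\}$, each costing $|I|+|J|$ distinct productions, and these rectangles cover all pairs $i\le j$. A rectangle inside this relation has $|I\cap J|\le 1$. Splitting off the shared element $k$ into $(I\setminus\{k\})\times J$ and $\{k\}\times(J\setminus\{k\})$ at most doubles the cost. The result is a cover of the edges of $K_n$ by complete bipartite graphs with disjoint sides.
\item \emph{Conclusion.} By Hansel's lemma such a cover has total size at least $n\log_2 n$, so no procedure achieves $\bigO(|P|)$ output size.
\end{itemize}

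Your fallback is therefore the correct form of the lemma: $\bigO(|P|)$ for unit-free input, and $\bigO(|N|\cdot|P|)$ in general via the unit closure. The SCC and topological-order refinements cannot improve the worst case, so they are not worth pursuing. As you note, the weaker bound is harmless downstream. Theorem~\ref{thm:lin-subcubic} is stated for the TALNF grammar, and its dependence on $n$ and $m$ is unaffected.
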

\begin{proof}
Write productions in the linear form $A\to xBy$ or $A\to x$ with $x,y\in\Sigma^*$.
First eliminate $\eps$-productions except possibly $S\to\eps$, and remove unit rules $A\to B$ by the standard unit-closure procedure; for linear grammars this preserves linearity and takes $O(|P|)$ time with at most constant-factor size increase.
Henceforth, no $A\to B$ rules remain and only $S\to\eps$ may persist.

We now transform each remaining production independently.
Let $x=a_1\cdots a_p$ and $y=b_1\cdots b_q$.

\emph{Terminal-only rules.}
If $A\to a$ then keep it.
If $A\to a_1\cdots a_p$ with $p\ge 2$, introduce fresh nonterminals $T_1,\dots,T_{p-1}$ and replace it by
\[
A\to a_1T_1,\quad
T_i\to a_{i+1}T_{i+1}\ (1\le i\le p-2),\quad
T_{p-1}\to a_p.
\]
These are all of the form $A\to aB$ or $A\to a$ and generate exactly $x$.

\emph{Mixed rules $A\to xBy$.}
If $q=0$ and $p\ge 1$, create a prefix chain ending in $B$:
\[
A\to a_1A_1,\quad
A_i\to a_{i+1}A_{i+1}\ (1\le i\le p-2),\quad
A_{p-1}\to a_pB.
\]
If $q\ge 1$, define new nonterminals $B^{(1)},\dots,B^{(q)}$ with
\[
B^{(1)}\to Bb_1,\quad
B^{(i)}\to B^{(i-1)}b_i\ (2\le i\le q).
\]
Then $B^{(q)}\Der L(B)\,b_1\cdots b_q$ using only $A\to Ba$ rules.
If $p\ge 1$, replace $A\to xBy$ by the same prefix chain as above but ending in $B^{(q)}$ (all rules are $A\to aB$).
If $p=0$ (so the rule is $A\to By$), use the suffix chain with $A$ as the final node:
for $q=1$, set $A\to Bb_1$; for $q\ge 2$, introduce $B^{(1)},\dots,B^{(q-1)}$ and add
\[
B^{(1)}\to Bb_1,\quad
B^{(i)}\to B^{(i-1)}b_i\ (2\le i\le q-1),\quad
A\to B^{(q-1)}b_q.
\]
The remaining case $p=q=0$ would be a unit rule $A\to B$, already removed.

\emph{Equivalence.}
Each construction emits the prefix $x$, threads the unique nonterminal $B$, and then appends the suffix $y$ in the correct order; no other strings are introduced.
Conversely, any derivation through the new rules collapses to the original production.
Thus $L_{\calG}(A)=L_{\calG'}(A)$ for all $A$.

\emph{Size and time.}
For a production with $|x|=p$ and $|y|=q$, we add $O(p+q)$ new rules and nonterminals.
Summed over all productions, this is $O(|P|)$ in the standard size measure (total right-hand-side length), and the construction runs in linear time.
\end{proof}

\subsection{Linear reachability index \textsc{LinIndex}}
\begin{definition}[\textsc{LinIndex}]
Assume $\calG$ is in TALNF.
\textsc{LinIndex} stores the same $\{M_A\}$ relations as \textsc{SatIndex} but uses only \emph{terminal-anchored} propagation rules, yielding $\bigO(|P|\cdot m\cdot n)$ preprocessing.
\end{definition}

To implement propagation efficiently, precompute label-partitioned adjacency:
\[
\mathrm{In}_a(v)=\{u\in V : (u,v)\in E\text{ and }\lambda(u,v)=a\},\quad
\mathrm{Out}_a(u)=\{v\in V : (u,v)\in E\text{ and }\lambda(u,v)=a\}.
\]

Algorithm~\ref{alg:linindex} iterates over these adjacency lists to realize terminal-anchored propagation.
Figure~\ref{fig:lin-rule} illustrates the $A\to aB$ case; the $A\to Ba$ case is symmetric.

\begin{figure}[H]
\centering
\begin{tikzpicture}[>=Latex, node distance=1.8cm]
\node[circle,draw,minimum size=7mm] (u) {$u$};
\node[circle,draw,minimum size=7mm,right=of u] (x) {$x$};
\node[circle,draw,minimum size=7mm,right=of x] (v) {$v$};

\draw[->] (u) -- node[above] {$a$} (x);
\draw[->,dashed,thick] (x) to[bend left=20] node[above] {$B$} (v);
\draw[->,very thick] (u) to[bend right=20] node[below] {$A$} (v);

\node[below=0.9cm of x,align=center] (cap) {\small Rule $A\to aB$:\\
{\small if $u\xrightarrow{a}x$ and $x\xRightarrow{B}v$ then $u\xRightarrow{A}v$.}};
\end{tikzpicture}
\caption{Terminal-anchored inference used by \textsc{LinIndex}. Dashed arrows are derived relations.}
\label{fig:lin-rule}
\end{figure}

The two terminal-anchored inference patterns can be stated explicitly. For each $A\in N$,
\[
\frac{A\to aB\in P,\ (u,x)\in E,\ \lambda(u,x)=a,\ (x,v)\in\calR_B}{(u,v)\in\calR_A}
\qquad
\frac{A\to Ba\in P,\ (u,x)\in\calR_B,\ (x,v)\in E,\ \lambda(x,v)=a}{(u,v)\in\calR_A}.
\]
Algorithm~\ref{alg:linindex} realizes these rules by scanning $\mathrm{Out}_a(u)$ or $\mathrm{In}_a(v)$, respectively, and inserting newly discovered pairs into the worklist.

\begin{algorithm}[H]
\caption{\textsc{LinIndex-Build}$(G,\calG)$ (TALNF)}
\label{alg:linindex}
\begin{algorithmic}[1]
\Require $G=(V,E,\Sigma,\lambda)$, linear $\calG$ in TALNF
\Ensure $\{M_A\}_{A\in N}$ and witnesses $W$
\State Initialize all $M_A\gets 0$ and $W\gets\bot$; queue $Q\gets\emptyset$
\Comment{Terminal rules}
\ForAll{$A\to a\in P$}
  \ForAll{$(u,v)\in E$ with $\lambda(u,v)=a$}
    \If{$M_A[u,v]=0$}
      \State $M_A[u,v]\gets 1$; $W[A,u,v]\gets\langle\textsf{term},(u,v)\rangle$; Enqueue $(A,u,v)$
    \EndIf
  \EndFor
\EndFor
\If{$S\to\eps\in P$}
  \ForAll{$u\in V$}
    \If{$M_S[u,u]=0$}
      \State $M_S[u,u]\gets 1$; $W[S,u,u]\gets\langle\textsf{eps}\rangle$; Enqueue $(S,u,u)$
    \EndIf
  \EndFor
\EndIf
\While{$Q\neq\emptyset$}
  \State Dequeue $(B,x,v)$
  \Comment{Rules of form $A\to aB$}
  \ForAll{productions $A\to aB\in P$}
    \ForAll{$u\in \mathrm{In}_a(x)$}
      \If{$M_A[u,v]=0$}
        \State $M_A[u,v]\gets 1$; $W[A,u,v]\gets\langle\textsf{linL},a,(u,x),B\rangle$; Enqueue $(A,u,v)$
      \EndIf
    \EndFor
  \EndFor
  \Comment{Rules of form $A\to Ba$}
  \ForAll{productions $A\to Ba\in P$}
    \ForAll{$w\in \mathrm{Out}_a(v)$}
      \If{$M_A[x,w]=0$}
        \State $M_A[x,w]\gets 1$; $W[A,x,w]\gets\langle\textsf{linR},B,a,(v,w)\rangle$; Enqueue $(A,x,w)$
      \EndIf
    \EndFor
  \EndFor
\EndWhile
\State \Return $M,W$
\end{algorithmic}
\end{algorithm}

\begin{theorem}[Correctness of \textsc{LinIndex}]
\label{thm:lin-correct}
Assume $\calG$ is in TALNF. Algorithm~\ref{alg:linindex} computes exact $\calR_A$ relations:
$M_A[u,v]=1 \iff (u,v)\in\calR_A$ for all $A,u,v$.
\end{theorem}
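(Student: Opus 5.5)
We mirror the proof of Theorem~\ref{thm:sat-correct}, separating soundness and completeness, with the two terminal-anchored inference rules taking the place of the binary rule $A\to BC$. The invariant throughout is that every entry ever placed in the queue has its matrix bit set, and every bit set to $1$ is enqueued exactly once (at the moment it is set). Hence the main loop terminates after at most $|N|n^2$ dequeues, and each set entry is eventually processed.

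\emph{Soundness.} We induct on the time at which $M_A[u,v]$ is first set to $1$. Initialization entries are witnessed exactly as in Theorem~\ref{thm:sat-correct}: by a single edge for $A\to a$, and by the empty path for $S\to\eps$. Suppose instead the entry is created while processing a dequeued $(B,x,v)$ under a rule $A\to aB$ with $u\in\mathrm{In}_a(x)$.
\begin{itemize}
\item By induction, $M_B[x,v]=1$ gives a path $x\leadsto v$ with trace in $\calL_{\calG}(B)$.
\item Prepending the edge $(u,x)$, which is labeled $a$, gives a path whose trace lies in $a\,\calL_{\calG}(B)\subseteq\calL_{\calG}(A)$.
\end{itemize}
The case $A\to Ba$ is symmetric: append the edge $(v,w)$ labeled $a$.

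\emph{Completeness.} Suppose $(u,v)\in\calR_A$, witnessed by a path $\pi$ with trace $w\in\calL_{\calG}(A)$. We induct on the height of a TALNF derivation tree for $A\Der w$. In TALNF every non-leaf step is $A\to aB$ or $A\to Ba$, so the tree is a spine.
\begin{itemize}
\item \emph{Base cases.} $A\to a$ forces $\pi$ to be a single $a$-labeled edge, and $S\to\eps$ forces $u=v$. Both entries are inserted at initialization.
\item \emph{Root rule $A\to aB$.} Here $w=a w'$ with $B\Der w'$. Let $x=v_1$ be the second vertex of $\pi$. Then $(u,x)\in E$ with label $a$, so $u\in\mathrm{In}_a(x)$. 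The suffix of $\pi$ witnesses $(x,v)\in\calR_B$ with a shorter derivation, so by induction $M_B[x,v]$ is eventually set and hence enqueued. When $(B,x,v)$ is dequeued, the loop over rules $A\to aB$ scans $\mathrm{In}_a(x)$, reaches $u$, and sets $M_A[u,v]$ (unless it is already set).
\item \emph{Root rule $A\to Ba$.} Symmetrically, let $x=v_{k-1}$ be the penultimate vertex of $\pi$. By induction $(u,x)$ is eventually set for $B$. When $(B,u,x)$ is dequeued, the loop scans $\mathrm{Out}_a(x)$ and finds $v$.
\end{itemize}

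The main obstacle is a bookkeeping mismatch in the pseudocode for the $A\to Ba$ case. There the dequeued triple is named $(B,x,v)$, so the pair it writes is $(x,w)$ with $w\in\mathrm{Out}_a(v)$. Our completeness argument uses a different naming, so we will rename variables carefully to check that this matches the inference rule: the first coordinate stays fixed and the edge extends the second coordinate.

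A second point to state explicitly is that the rules for a dequeued $B$ include every production with $B$ on its right-hand side. This holds because the loops range over all $A\to aB$ and $A\to Ba$ in $P$ for the dequeued nonterminal.

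Finally, the witness records $\langle\textsf{linL},\dots\rangle$ and $\langle\textsf{linR},\dots\rangle$ are not needed for correctness of the relation $M_A$. They only support extraction, analogous to Algorithm~\ref{alg:extract}.
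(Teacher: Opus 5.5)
Your proof is correct and takes the same route as the paper. The paper only sketches that the argument mirrors Theorem~\ref{thm:sat-correct}, with the three terminal-anchored inference patterns replacing $A\to BC$; your soundness induction on insertion time and completeness induction on TALNF derivation height supply exactly the details that sketch leaves implicit. The ``bookkeeping mismatch'' you flag is not a real obstacle: for a dequeued $(B,x,v)$, the pseudocode's write to $(x,w)$ with $w\in\mathrm{Out}_a(v)$ is literally the $A\to Ba$ inference rule under a renaming of variables.
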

\begin{proof}[Proof sketch]
The proof mirrors Theorem~\ref{thm:sat-correct} but uses TALNF derivations instead of CNF.
Each TALNF step either emits one terminal edge ($A\to a$), prepends a terminal edge ($A\to aB$), appends a terminal edge ($A\to Ba$), or contributes $\eps$ for $S$.
The worklist enumerates exactly the closure under these three inference patterns.
\end{proof}

\begin{theorem}[Subcubic preprocessing for sparse graphs]
\label{thm:lin-subcubic}
Let $\calG$ be in TALNF. Algorithm~\ref{alg:linindex} runs in time $\bigO(|P|\cdot m\cdot n)$ and uses $\bigO(|N|\cdot n^2)$ space.
In particular, on sparse graphs with $m=\bigO(n)$, preprocessing is $\bigO(|P|\cdot n^2)$, which is subcubic in $n$.
\end{theorem}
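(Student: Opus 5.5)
The plan is a charging argument in the style of the proof of Theorem~\ref{thm:sat-bounds}. The difference is that each dequeued fact $(B,x,v)$ only scans a single labeled adjacency list, and never a full row or column of some $M_C$.

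First I bound the initialization. Terminal rules are handled by scanning the edges grouped by label, which costs $O(|P|+m)$. The optional $S\to\eps$ step costs $O(n)$. Each true entry is enqueued exactly once, because it is enqueued only when it flips from $0$ to $1$. So there are at most $|N|n^2$ dequeues.

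Next I charge the work of the main loop to productions. Fix a production $A\to aB$. It is examined exactly when some triple $(B,x,v)$ is dequeued, and the work done is $O(1+|\mathrm{In}_a(x)|)$. For a fixed $x$ there are at most $n$ values of $v$. The total charge to this production is therefore at most
\[
\sum_{x\in V}\ \sum_{v\in V} \bigl(1+|\mathrm{In}_a(x)|\bigr) \le n^2 + n\sum_{x}|\mathrm{In}_a(x)| \le n^2+nm .
\]
The production $A\to Ba$ is symmetric: sum over $(x,v)$ and charge $|\mathrm{Out}_a(v)|$, for a total of at most $n^2+nm$.

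Summing over all productions gives $O(|P|(n^2+nm))$. To reach the stated bound I need to absorb the $n^2$ term. If $m\ge n$ it is absorbed directly. Otherwise I argue that isolated vertices contribute no derived facts beyond diagonal $\eps$-entries, which need to be handled separately. Since the theorem's sparse regime has $m=\Theta(n)$, I state the bound as $O(|P|(nm+n^2))$, which equals $O(|P|\,nm)$ whenever $m\ge n$.

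This absorption is the main subtlety. Two details must be handled carefully:
\begin{enumerate}
\item The productions with head $B$ have to be retrieved efficiently. I would index productions by their body nonterminal, so each dequeue touches only the relevant rules and not all of $P$.
\item The $O(1)$ per-(dequeue, production) overhead must be bounded by $|P|n^2$.
\end{enumerate}

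For space, I use $|N|$ boolean $n\times n$ matrices, one constant-size witness per true entry, a queue of at most $|N|n^2$ items, and $O(m)$ for the adjacency lists. Since $m\le n^2$, this totals $O(|N|n^2)$.

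Finally, substituting $m=O(n)$ gives $O(|P|n^2)$ time, which is $o(n^3)$ for a fixed grammar. Correctness is inherited from Theorem~\ref{thm:lin-correct}.
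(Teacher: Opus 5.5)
Your argument is the paper's charging argument. You fix $A\to aB$, note that each pair $(x,v)$ is dequeued at most once as $(B,x,v)$ and triggers one scan of $\mathrm{In}_a(x)$, and sum $n\,|\mathrm{In}_a(x)|$ over $x$ to get $n m_a\le nm$; the case $A\to Ba$ is symmetric with $\mathrm{Out}_a(v)$.

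The one real difference is that you also charge the $O(1)$ cost of every (dequeue, production) visit and require productions to be indexed by their body nonterminal. The paper writes the per-dequeue cost as $O(|\mathrm{In}_a(x)|)$ and so silently drops an $O(|P|n^2)$ term. The $\Theta(|N|n^2)$ zero-initialization of the matrices, which neither write-up mentions, is likewise covered only by an $n^2$-type term. Your $O(|P|(nm+n^2))$ is therefore the honest bound for Algorithm~\ref{alg:linindex} as written. The paper's unconditional $O(|P|mn)$ tacitly assumes $m=\Omega(n)$: for a fixed grammar and $m=o(n)$, the zero-initialization alone costs $\Theta(n^2)=\omega(mn)$. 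You also undersell your result. It gives the sparse conclusion $O(|P|n^2)$ for every $m=O(n)$, not only $m=\Theta(n)$, because then $nm+n^2=O(n^2)$.

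What you leave open is the first claim for $m<n$. Your isolated-vertex remark is the right idea, but it is not carried out. To finish it:
\begin{itemize}
\item In TALNF only $S$ derives $\eps$. So for an isolated vertex $u$, the only possible true entry involving $u$ is $(S,u,u)$, present iff $S\to\eps\in P$.
\item Build the matrices only over the $n'\le 2m$ non-isolated vertices, and answer all other queries by that test.
\item Your bound then reads $O(|P|(n'm+n'^2))=O(|P|mn')\subseteq O(|P|mn)$ for $m\ge 1$, since $n'^2\le 2mn'$.
\end{itemize}

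A minor slip: terminal-rule initialization costs $O\bigl(\sum_{A\to a\in P}m_a\bigr)$, not $O(|P|+m)$. Distinct rules sharing a label each write up to $m_a$ entries, so this can be $\Theta(|P|m)$. It is dominated either way.
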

\begin{proof}
We account for the total number of iterations of the innermost loops.
Initialization over terminal rules $A\to a$ scans labeled edges once and inserts each matrix entry at most once, so it costs $O(|P|\cdot m)$ time (or $O(m)$ after grouping edges by label); the optional $S\to\eps$ initialization costs $O(n)$.
These terms are dominated by the main worklist.

Each queue element $(B,x,v)$ corresponds to a newly inserted true entry $M_B[x,v]=1$, so the number of dequeues is at most $|N|\cdot n^2$.
Fix a production $A\to aB$.
Whenever a pair $(B,x,v)$ is dequeued, the algorithm scans $\mathrm{In}_a(x)$ and performs $O(|\mathrm{In}_a(x)|)$ membership checks and at most that many insertions for $A$.
For a fixed $x$, the pair $(x,v)$ can be dequeued for at most $n$ choices of $v$, so the total work charged to all dequeues with that fixed $x$ is at most $n\cdot |\mathrm{In}_a(x)|$.
Summing over all $x\in V$ gives
\[
\sum_{x\in V} n\cdot |\mathrm{In}_a(x)| \;=\; n\cdot \sum_{x\in V} |\mathrm{In}_a(x)|
\;=\; n\cdot m_a \;\le\; n\cdot m,
\]
where $m_a$ is the number of edges labeled $a$.
Thus rule $A\to aB$ contributes $O(nm)$ time.
An identical accounting applies to rules of the form $A\to Ba$, replacing $\mathrm{In}_a(x)$ by $\mathrm{Out}_a(v)$, so each such rule also contributes $O(nm)$ time.
Summing over all productions yields total time $\bigO(|P|\cdot m\cdot n)$.

For space, each nonterminal $A$ stores a boolean matrix $M_A\in\{0,1\}^{n\times n}$ and a constant-size witness record per true entry, giving $O(|N|\cdot n^2)$ space overall; the queue and adjacency lists fit within the same asymptotic bound.
Finally, if $m=\bigO(n)$, the preprocessing time becomes $\bigO(|P|\cdot n^2)$, which is subcubic in $n$.
\end{proof}

\begin{remark}[Connection to known bounds]
The existence of $\bigO(mn)$-time linear-CFL reachability algorithms is discussed in the literature (e.g., attributed to \cite{yannakakis1990,lei2024skewed}).
The theorem above provides a self-contained accounting proof in TALNF for the all-pairs index build.
\end{remark}

\section{An Alternative Index: Shared Witness DAG Compression and Optional Dynamics}
\label{sec:swd}

This section implements extension (2): a new index with different trade-offs.
The core idea is to \emph{share} repeated witness substructures across queries and optionally output \emph{compressed} traces.

\subsection{Shared-witness DAG representation}

\begin{definition}[Witness DAG nodes]
A witness DAG is a rooted DAG whose nodes are of three kinds:
\[
\textsf{Edge}(u,v),\quad \textsf{Eps}(u),\quad \textsf{Concat}(p,q),
\]
where $p,q$ are witness nodes.
A node denotes a (possibly empty) path; semantics are: $\textsf{Edge}(u,v)$ denotes the single edge $(u,v)$;
$\textsf{Eps}(u)$ the empty path at $u$; $\textsf{Concat}(p,q)$ concatenation.
\end{definition}

\begin{definition}[\textsc{SWDIndex}]
Given any index that produces witnesses (baseline or linear), \textsc{SWDIndex} replaces per-entry witness records by pointers into a \emph{hash-consed} global witness DAG, sharing identical subwitnesses.
Query returns a pointer to a DAG root.
\end{definition}

\begin{theorem}[Space bound and extraction modes]
\label{thm:swd}
Let $T$ be the number of true entries across all $M_A$ matrices.
\textsc{SWDIndex} uses $O(T)$ root pointers plus $O(W)$ DAG nodes where $W\le O(T)$ and each node is created at most once by hash-consing.
It supports:
(i) explicit path extraction in time linear in output length; and
(ii) compressed trace output in time $O(\text{DAG-size of the returned root})$.
\end{theorem}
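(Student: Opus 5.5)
The plan is to build \textsc{SWDIndex} directly from the witness table $W$ produced by Algorithm~\ref{alg:satindex} (or Algorithm~\ref{alg:linindex}), and to charge every DAG node to a true matrix entry. I would map each true entry $(A,u,v)$ to one DAG node $\nu(A,u,v)$ by structural recursion on its witness record. A $\langle\textsf{term},(u,v)\rangle$ record maps to $\textsf{Edge}(u,v)$, and $\langle\textsf{eps}\rangle$ maps to $\textsf{Eps}(u)$. A $\langle\textsf{bin},B,C,m\rangle$ record maps to $\textsf{Concat}(\nu(B,u,m),\nu(C,m,v))$. The records $\textsf{linL}$ and $\textsf{linR}$ map to $\textsf{Concat}(\textsf{Edge}(u,x),\nu(B,x,v))$ and $\textsf{Concat}(\nu(B,u,x),\textsf{Edge}(x,w))$, respectively. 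Node creation goes through a hash table keyed by (kind, children), so structurally identical nodes are created at most once.

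The first thing to establish is well-definedness, since this is what makes the recursion acyclic. Each witness record references only entries that were inserted strictly earlier; this is immediate from the worklist order in the algorithms, where the split entries must already be true and dequeued. I would therefore create nodes in insertion order. Each child pointer then refers to an already-built node, so the structure is a DAG, and by the soundness half of Theorem~\ref{thm:sat-correct} (resp.\ Theorem~\ref{thm:lin-correct}) each root denotes a valid witness path.

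For the space bound, there is one root pointer per true entry, giving $T$ pointers. Each true entry creates at most one $\textsf{Concat}$ node, plus at most one auxiliary $\textsf{Edge}$ node in the linear case. $\textsf{Edge}$ nodes number at most $m$, and each corresponds to some terminal entry or auxiliary edge use, so there are at most $T$ of them in the relevant count. There are at most $n$ $\textsf{Eps}$ nodes, each charged to a true diagonal entry of $M_S$. Hence $W\le O(T)$, and hash-consing only merges nodes, which can only decrease this count. With expected $O(1)$ hashing the construction also costs $O(T)$ overhead on top of the index build.

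For the extraction modes, mode (i) is a left-to-right DFS that outputs edges at $\textsf{Edge}$ leaves. The visited tree is a full binary tree whose leaves are either edges or $\textsf{Eps}$ nodes, and the main obstacle is here. $\textsf{Eps}$ leaves contribute no output, so a long chain of $\textsf{Concat}$ over $\textsf{Eps}$ could break linearity. I would handle this by noting that in CNF, $\eps$ arises only from $S\to\eps$ with $S$ not on any right-hand side (the standard CNF convention). It appears only as a root, so every internal node has two nonempty children, and the number of nodes is at most $2\ell-1$ for output length $\ell$. In TALNF, a $\textsf{Concat}$ always has an $\textsf{Edge}$ child, so the same bound holds. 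Mode (ii) simply returns the sub-DAG reachable from the root, computed by a DFS with a visited set, in time $O(\text{DAG-size})$, emitting it as a straight-line program.
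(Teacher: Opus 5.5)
Your proposal is correct and follows essentially the same route as the paper: hash-consed \textsf{Edge}/\textsf{Eps}/\textsf{Concat} constructors charged to true entries give the $O(T)$ node bound, recursive expansion gives explicit paths, and SLP emission over the sub-DAG reachable from the root gives compressed output. Your argument is more careful than the paper's in three places: you treat the \textsf{linL}/\textsf{linR} records explicitly; you justify acyclicity via insertion order (in Algorithm~\ref{alg:satindex} the second split entry is only guaranteed to be true, not dequeued, but that is all you need); and you bound the non-emitting \textsf{Eps} leaves, a cost the paper's ``each edge is emitted once'' claim ignores, using the standard CNF convention that $S$ never occurs on a right-hand side, which the paper's CNF definition leaves implicit.
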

\begin{proof}
We prove the space bound and the two extraction guarantees.

\emph{Space.} Each true matrix entry stores only a root pointer, so there are $T$ such pointers overall.
When a witness is created by the underlying index (baseline or linear), \textsc{SWDIndex} converts it into a DAG node using the canonical constructors
$\textsf{Edge}(u,v)$, $\textsf{Eps}(u)$, and $\textsf{Concat}(p,q)$.
Hash-consing ensures that each constructor on the same arguments is materialized at most once.
Thus the number of DAG nodes created is at most the number of distinct constructor attempts, which is $O(T)$.
Therefore total space is $O(T)$ root pointers plus $O(W)$ DAG nodes with $W\le O(T)$.

\emph{Explicit extraction.} To output a concrete witness path from a root $r$, recursively expand nodes:
output the single edge for $\textsf{Edge}(u,v)$, output nothing for $\textsf{Eps}(u)$, and for $\textsf{Concat}(p,q)$ output the expansion of $p$ followed by the expansion of $q$.
Each output edge is emitted exactly once, so the running time is linear in the explicit path length.

\emph{Compressed trace output.} Let $D_r$ be the sub-DAG reachable from $r$ and topologically order its nodes.
Assign a fresh nonterminal $X_x$ to each node $x$ and emit:
$X_x\to \lambda(u,v)$ for $\textsf{Edge}(u,v)$,
$X_x\to \eps$ for $\textsf{Eps}(u)$, and
$X_x\to X_pX_q$ for $\textsf{Concat}(p,q)$.
By induction over the topological order, $X_x$ derives exactly the trace denoted by $x$, hence $X_r$ derives the trace of the witness.
The number of rules is $|D_r|$ and the construction is linear in $|D_r|$, giving the stated bound.
\end{proof}

\subsection{Witness compression procedure (SLP emission)}
\begin{algorithm}[H]
\caption{\textsc{EmitSLP}$(r)$ for witness root $r$}
\label{alg:emit-slp}
\begin{algorithmic}[1]
\Require A witness DAG root $r$
\Ensure A straight-line program (SLP) producing the trace of $r$
\State Topologically order all DAG nodes reachable from $r$
\ForAll{node $x$ in topological order}
  \If{$x=\textsf{Edge}(u,v)$} \State Output rule $X_x \to \lambda(u,v)$ \EndIf
  \If{$x=\textsf{Eps}(u)$} \State Output rule $X_x \to \eps$ \EndIf
  \If{$x=\textsf{Concat}(p,q)$} \State Output rule $X_x \to X_p X_q$ \EndIf
\EndFor
\State Output start symbol $X_r$
\end{algorithmic}
\end{algorithm}

\begin{remark}[Certified/structured witnesses]
Grammar-based compression of reachability witnesses and subcubic certificate phenomena are studied for CFL reachability in \cite{chistikov2022certificates}.
My intent here is narrower: the index can output a \emph{compressed witness object} whose size is the size of the shared DAG reachable from $(s,t)$.
\end{remark}

\subsection{Optional dynamic index on a restricted CFL: bidirected Dyck}
Dynamic updates are hard on general graphs (cf.\ conditional bounds related to \OMv), but strong results exist for bidirected Dyck reachability.

\begin{definition}[Bidirected Dyck instance (restricted class)]
Fix $k$ parenthesis types and matched labels $\{L_i,M_i\}_{i=1}^k$.
A labeled graph is \emph{bidirected Dyck} if each edge $u\xrightarrow{L_i}v$ has a corresponding inverse edge $v\xrightarrow{M_i}u$ and vice-versa.
The language gate is $\Dyck_k$.
\end{definition}

\begin{theorem}[Known dynamic bound (cited)]
\label{thm:dyn-dyck}
There exists a fully dynamic algorithm maintaining all-pairs bidirected Dyck reachability under insertions/deletions with preprocessing $O(m)$, worst-case update $O(n\cdot \alpha(n))$, and $O(1)$ query time, where $\alpha$ is inverse Ackermann.
\end{theorem}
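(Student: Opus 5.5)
This theorem is imported from the literature on dynamic bidirected Dyck reachability (Chatterjee, Choudhary, Pavlogiannis and related work), not proved from scratch here. So my plan is to reconstruct the structural argument that makes the cited bounds plausible, and to defer to the source for the tight worst-case accounting.

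\emph{Step 1: reduce to an equivalence relation.} In a bidirected Dyck graph, $\Dyck_k$-reachability is symmetric and transitive. Symmetry comes from the inverse edges: reversing a witness path and swapping $L_i\leftrightarrow M_i$ yields a Dyck word again. Transitivity holds because concatenating two Dyck words gives a Dyck word. Reflexivity comes from the empty path. Hence the relation is an equivalence, and all-pairs reachability reduces to maintaining a partition of $V$ into classes. A query then becomes $\textsc{Find}(s)=\textsc{Find}(t)$, which costs $O(1)$ if each vertex stores an explicit class label.

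\emph{Step 2: static $O(m)$ preprocessing via congruence closure.} The classes form the coarsest partition closed under the following merge rule. If $x,y$ lie in the same class and $u\xrightarrow{L_i}x$, $w\xrightarrow{L_i}y$, then $u$ and $w$ must be merged. I would prove this rule is sound by wrapping the matched word between the two edges. I would prove completeness by induction on the Dyck derivation: the innermost matched pair $u\xrightarrow{L_i}x\xrightarrow{M_i}w$ is exactly the case $x=y$. Processing merges with union-find and a per-class, per-label "representative in-neighbour" list, while always merging the smaller list into the larger, gives near-linear time. The source obtains $O(m)$, or $O(m\,\alpha(n))$, by careful list splicing.

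\emph{Step 3: dynamic updates.} Insertions only coarsen the partition, so the Step 2 procedure can be resumed incrementally at the cost of the merges it triggers. Relabelling is bounded by $O(n)$ per update, and union-find adds the $\alpha(n)$ factor. Deletions are the main obstacle, because a class may split. Here I would rely on the cited approach. It locates the classes affected by the deleted edge, dissolves them together with the classes that were merged through them, and recomputes locally via Step 2 on the induced subproblem. One must then argue that the total re-merging work is $O(n\,\alpha(n))$ in the worst case. This step rests on the specific charging argument in the source, so the theorem is presented as cited rather than reproved.
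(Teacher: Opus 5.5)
Your route is the same as the paper's. The paper gives no argument for Theorem~\ref{thm:dyn-dyck} beyond a pointer to \cite{li2022dynamic}, and you likewise defer the update accounting to the literature. The problem is that for a cited theorem the reference is the whole proof, and yours points to the wrong work. Chatterjee, Choudhary and Pavlogiannis \cite{chatterjee2018dyck} give the \emph{static} bidirected algorithm, running in $O(m+n\,\alpha(n))$ time rather than $O(m\,\alpha(n))$, and they do not provide the fully dynamic bound stated here. The result with $O(n\,\alpha(n))$ worst-case updates and $O(1)$ queries is due to Li, Satya and Zhang \cite{li2022dynamic}, and that is the paper you must cite. Cite \cite{chatterjee2018dyck} at most for the static ingredient behind the preprocessing.

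If you keep the reconstruction as motivation, fix Step~2. The reachability classes are the \emph{finest} equivalence closed under your merge rule, i.e.\ the least rule-closed equivalence. They are not the coarsest: the one-block partition $\{V\}$ is trivially closed under the rule. Your own arguments in fact prove the least-fixpoint statement:
\begin{itemize}
\item Soundness shows that reachability is a rule-closed equivalence, so it contains the least one.
\item Induction over $D\mapsto L_i\,D\,M_i$ and over concatenation gives the reverse inclusion. Here you use bidirectedness to read $y\xrightarrow{M_i}w$ as $w\xrightarrow{L_i}y$, and transitivity for the concatenation case.
\end{itemize}
In Step~3, note that a local rerun of Step~2 after a deletion rescans the original edges incident to the dissolved vertices. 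There can be $\Theta(n^2)$ of these, so your sketch does not by itself explain an $O(n\,\alpha(n))$ worst-case deletion bound. That part rests entirely on \cite{li2022dynamic}, and you should present it that way.
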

\begin{proof}[Justification]
This is established in \cite{li2022dynamic}.
\end{proof}

\begin{remark}[Witnesses under dynamics]
The cited dynamic algorithm maintains equivalence-class information.
To align with the witness-oriented interface of this paper, one can store (in addition) a merge forest augmented with local proof objects and expose those proof objects as \textsc{SWDIndex}-style compressed witnesses.
Formalizing shortest explicit Dyck-balanced path extraction under updates is outside the core scope of this concise manuscript; this paper records the update/query trade-off and treats compressed proof objects as the witness artifact.
\end{remark}

\section{A New Application Constraint: Single-Path Semantics (Shortest Accepted Path)}

This section implements extension (4): a formal application variant where I am not merely answering existence, but producing \emph{one} accepted path---specifically the shortest accepted path in number of edges.
This changes the index from boolean closure to \emph{distance-aware} closure.

\subsection{Single-path semantics}
Following the path-based semantics viewpoint, define:

\begin{definition}[Shortest accepted path problem]
Given $\calH=(G,\calG)$ and nodes $s,t$, define
\[
\mathrm{SP\text{-}CFR}_{\calH}(s,t) \;=\; \min\{|\pi| : \pi \text{ is a path } s\leadsto t,\ \Tr(\pi)\in \calL_{\calG}(S)\},
\]
with $\infty$ if no accepted path exists.
\end{definition}

This objective is studied in the context-free path-querying literature as \emph{single-path semantics} (shortest witness) \cite{hellings2015paths}.

\subsection{Distance-aware index for linear grammars on sparse graphs}
For linear grammars in TALNF, shortest accepted path can be indexed in subcubic preprocessing on sparse graphs.

\begin{definition}[\textsc{LinDistIndex}]
Assume $\calG$ is TALNF.
\textsc{LinDistIndex} stores, for each nonterminal $A$ and pair $(u,v)$, the minimum path length $D_A[u,v]\in\mathbb{N}\cup\{\infty\}$ such that $A\Der \Tr(\pi)$ for some $\pi:u\leadsto v$ of length $D_A[u,v]$.
\end{definition}

The dynamic programming rules mirror the boolean ones, replacing $\lor$ by $\min$ and concatenation by $+1$ for terminal-anchored steps.
One can compute $D_A$ by a multi-source relaxation process on a derived dependency graph whose relaxations are terminal-anchored and therefore admit $O(|P|\cdot m\cdot n)$ time on sparse graphs (by the same counting argument as Theorem~\ref{thm:lin-subcubic}).

\begin{theorem}[Subcubic preprocessing for shortest witnesses (linear, sparse)]
\label{thm:lindist}
If $\calG$ is TALNF and $m=\bigO(n)$, then \textsc{LinDistIndex} can be built in $\bigO(|P|\cdot n^2)$ time and $\bigO(|N|\cdot n^2)$ space, and returns a shortest accepted path witness in time linear in the witness length.
\end{theorem}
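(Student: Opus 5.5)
The plan is to compute $D_A$ by a BFS-style relaxation. The key observation is that every TALNF rule adds exactly one edge to a witness, except $S\to\eps$. So the dependency structure is a graph on states $(A,u,v)$ whose arcs all have unit weight, and BFS suffices in place of Dijkstra. Concretely:
\begin{itemize}
\item Form the state space $\{(A,u,v)\}$, of size $|N|n^2$.
\item Seed level $0$ with the states $(S,u,u)$ when $S\to\eps\in P$.
\item Seed level $1$ with the states $(A,u,v)$ for each $A\to a\in P$ and edge $(u,v)$ labeled $a$.
\item Process states in nondecreasing order of tentative distance, using a bucket queue indexed by length. Lengths are bounded by $|N|n^2$, since a shortest derivation never repeats a state.
\item When a state $(B,x,v)$ is finalized with distance $d$, apply the two rules of Algorithm~\ref{alg:linindex}. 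For each $A\to aB$ and each $u\in\mathrm{In}_a(x)$, relax $D_A[u,v]$ to $d+1$. For each $A\to Ba$ and each $w\in\mathrm{Out}_a(v)$, relax $D_A[x,w]$ to $d+1$.
\item On every successful relaxation, store a witness record $\langle\textsf{linL},a,(u,x),B\rangle$ or $\langle\textsf{linR},B,a,(v,w)\rangle$.
\end{itemize}
Because all relaxation weights are exactly $+1$, a plain FIFO queue already processes states in BFS order. Each state is then assigned its final distance at first discovery, exactly as each entry is inserted once in Algorithm~\ref{alg:linindex}.

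\textbf{Correctness.} I would argue correctness as the min-plus analogue of Theorem~\ref{thm:lin-correct}. First, the soundness invariant: every finite $D_A[u,v]$ is the length of an actual path $u\leadsto v$ whose trace is derived from $A$. This follows by induction on insertion time, since each rule prepends or appends one labeled edge to a witness for $B$. Second, optimality: take a shortest accepted path for $(A,u,v)$ with a TALNF derivation. Its last-applied outer rule, say $A\to aB$, splits the path into an edge $(u,x)$ and a subpath $x\leadsto v$ of length $\ell-1$ derived from $B$. By induction on $\ell$, $(B,x,v)$ is discovered at level $\le\ell-1$, so $(A,u,v)$ receives distance $\le\ell$. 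BFS monotonicity, meaning levels are nondecreasing in the FIFO queue, shows that the first-assigned value is the minimum. The $\eps$ case is only relevant for $S$ on the diagonal, which is consistent with TALNF.

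\textbf{Complexity.} The time bound reuses the counting argument of Theorem~\ref{thm:lin-subcubic} verbatim. Each state is dequeued at most once. A dequeue of $(B,x,v)$ costs $O(|\mathrm{In}_a(x)|)$ per rule $A\to aB$ and $O(|\mathrm{Out}_a(v)|)$ per rule $A\to Ba$. Summing over $v$ and then $x$ gives $O(nm)$ per production, hence $O(|P|\,mn)=O(|P|\,n^2)$ when $m=O(n)$. Space is one distance value and one constant-size witness per state, so $O(|N|n^2)$.

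\textbf{Witness extraction.} This is the step needing the most care: the stored records must form shortest witnesses, not merely valid ones. I would enforce the invariant that the predecessor $(B,x,v)$ recorded for $(A,u,v)$ satisfies $D_B[x,v]=D_A[u,v]-1$. This holds because the record is written at the relaxation that achieved the final distance. Extraction then follows predecessor pointers, as in Algorithm~\ref{alg:extract} but unary. Each step emits exactly one edge, and the length strictly decreases, so there are no cycles. The number of steps therefore equals $D_S[s,t]$, and the running time is linear in the witness length. If $D_S[s,t]=\infty$, the query reports that no accepted path exists.
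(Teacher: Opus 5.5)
Your proposal is correct and follows essentially the same route as the paper. The paper likewise treats the triples $(A,u,v)$ as nodes of an implicit unit-weight dependency graph seeded by terminal rules (distance $1$) and by $S\to\eps$ (distance $0$); it runs a multi-source BFS that generates arcs on the fly from $\mathrm{In}_a(\cdot)$ and $\mathrm{Out}_a(\cdot)$, reuses the $\sum_x n\,|\mathrm{In}_a(x)| = n\,m_a$ counting of Theorem~\ref{thm:lin-subcubic}, and extracts witnesses by following parent pointers. Your explicit level-$0$/level-$1$ seed ordering and your predecessor invariant $D_B[x,v]=D_A[u,v]-1$ are, if anything, slightly more careful than the paper's sketch; in the optimality step you mean the root (first-applied) rule rather than the ``last-applied'' one.
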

\begin{proof}
We define an implicit unweighted dependency graph $H$ whose nodes are triples $(A,u,v)$ with $A\in N$ and $u,v\in V$.
There is a directed edge in $H$ for each terminal-anchored production and graph edge:
\begin{itemize}[leftmargin=*]
  \item For every production $A\to aB$ and every graph edge $u\xrightarrow{a}x$, include an edge
  $(B,x,v)\to (A,u,v)$ for every $v\in V$.
  \item For every production $A\to Ba$ and every graph edge $x\xrightarrow{a}v$, include an edge
  $(B,u,x)\to (A,u,v)$ for every $u\in V$.
\end{itemize}
In addition, we treat terminal rules and $S\to\eps$ as sources:
if $A\to a$ and $u\xrightarrow{a}v$, then $(A,u,v)$ is a source with distance $1$; and if $S\to\eps$, then $(S,u,u)$ is a source with distance $0$.

\paragraph{Correctness.}
We show that for every $A,u,v$, the shortest-path distance in $H$ from the sources to $(A,u,v)$ equals $D_A[u,v]$.
First, any TALNF derivation for $A$ can be read as a sequence of terminal-anchored steps.
If the derivation uses $A\to a$, it corresponds to the source $(A,u,v)$ with length $1$.
If the derivation uses $A\to aB$, then the witness path in $G$ begins with a single edge $u\xrightarrow{a}x$ followed by a path witnessing $(B,x,v)$, and thus corresponds to a single edge in $H$ from $(B,x,v)$ to $(A,u,v)$ plus the inductive derivation for $(B,x,v)$.
The case $A\to Ba$ is symmetric, and $S\to\eps$ yields distance $0$ on the diagonal.
Therefore any accepted path yields a walk in $H$ of the same length, so $D_A[u,v]$ is at most the shortest-path distance.

Conversely, any path in $H$ from a source to $(A,u,v)$ corresponds to a sequence of terminal-anchored derivation steps that prepend or append exactly one terminal edge at each step.
Concatenating the corresponding graph edges yields a path $\pi:u\leadsto v$ whose trace is derived from $A$, and the length of $\pi$ equals the length of the path in $H$.
Thus the shortest-path distance in $H$ is at most $D_A[u,v]$, proving equality and hence correctness.

\paragraph{Algorithm and time bound.}
All edges in $H$ have unit weight, so the distances are computed by a multi-source BFS.
We do not materialize $H$; instead, when a node $(B,x,v)$ is dequeued, we generate its outgoing edges on the fly by scanning $\mathrm{In}_a(x)$ for each rule $A\to aB$ and $\mathrm{Out}_a(v)$ for each rule $A\to Ba$, exactly as in Algorithm~\ref{alg:linindex}, but performing relaxations on $D$.
Each node $(A,u,v)$ is settled at most once by BFS, and a parent pointer stored at the first time it is discovered.

Fix a production $A\to aB$.
For a fixed $x$, $(B,x,v)$ can be dequeued for at most $n$ choices of $v$, and each such dequeue scans $\mathrm{In}_a(x)$ once.
Thus the total work charged to this rule is
\[
\sum_{x\in V} n\cdot |\mathrm{In}_a(x)| \;=\; n\cdot m_a \;\le\; n\cdot m,
\]
where $m_a$ is the number of $a$-labeled edges.
The same bound holds for each rule of the form $A\to Ba$ using $\mathrm{Out}_a(\cdot)$.
Summing over all productions yields total time $\bigO(|P|\cdot m\cdot n)$.
On sparse graphs with $m=\bigO(n)$ this is $\bigO(|P|\cdot n^2)$.

\paragraph{Space and witness extraction.}
We store $D_A[u,v]$ and one parent pointer per discovered entry, so space is $\bigO(|N|\cdot n^2)$.
Following parent pointers reconstructs a derivation and a path in $G$; each step contributes one graph edge, so the extraction time is linear in the returned witness length.
\end{proof}

\begin{remark}[General CNF grammars]
For general CNF grammars, shortest-witness computation can be expressed on annotated grammars and analyzed under single-path semantics \cite{hellings2015paths}.
This manuscript emphasizes the linear+sparse regime, where the preprocessing is provably subcubic and the witness interface is compatible with the indexing view.
\end{remark}

\section{Trade-offs and Conditional Lower Bounds}

\subsection{Index trade-off table}

\begin{table}[H]
\centering
\small
\begin{tabularx}{\textwidth}{@{}>{\raggedright\arraybackslash}X c c c >{\raggedright\arraybackslash}X c@{}}
\toprule
Index / Setting &
Preprocess &
Space &
Query &
Witness output &
Update \\
\midrule
\textsc{SatIndex} (CNF, general) &
$O(|P|n^3)$ &
$O(|N|n^2)$ &
$O(1)$ &
$O(|\pi|)$ explicit &
--- \\
\textsc{LinIndex} (TALNF linear) &
$O(|P|mn)$ &
$O(|N|n^2)$ &
$O(1)$ &
$O(|\pi|)$ explicit &
--- \\
\textsc{LinIndex} + sparse ($m=O(n)$) &
$O(|P|n^2)$ &
$O(|N|n^2)$ &
$O(1)$ &
$O(|\pi|)$ &
--- \\
\textsc{SWDIndex} (on top of above) &
(unchanged) &
$O(T)$ roots + $O(W)$ DAG &
$O(1)$ &
$O(|\pi|)$ or $O(|\text{DAG}|)$ SLP &
--- \\
Dynamic bidirected $\Dyck_k$ index (restricted) &
$O(m)$ &
$O(m)$ (typ.) &
$O(1)$ &
compressed proof object &
$O(n\alpha(n))$ \\
\textsc{LinDistIndex} (linear+sparse, shortest) &
$O(|P|n^2)$ &
$O(|N|n^2)$ &
$O(1)$ for dist &
$O(|\pi^*|)$ shortest &
--- \\
\bottomrule
\end{tabularx}
\caption{Baseline vs.\ extensions. Here $T$ is the number of true relation entries; $W$ shared witness nodes; $\pi$ any witness; $\pi^*$ shortest witness; $\alpha(\cdot)$ inverse Ackermann.}
\label{tab:tradeoffs}
\end{table}

Table~\ref{tab:tradeoffs} provides a compact map from each index variant to its preprocessing, space, query, witness, and update costs; it is the reference point for the trade-offs discussed in Sections~4--6 (notably the linear+sparse subcubic regime and the dynamic bidirected \Dyck{} setting).
\FloatBarrier

\subsection{Conditional lower bounds (landscape)}
We note two standard conditional barriers to clarify the scope of the results.

\begin{theorem}[Conditional cubic barrier for Dyck/CFL reachability]
Assuming standard \BMM-type conjectures for combinatorial algorithms, truly subcubic algorithms for certain Dyck/CFL reachability variants on general graphs would imply a breakthrough for \BMM. \cite{chatterjee2018dyck,koutrisdeep2023fine}
\end{theorem}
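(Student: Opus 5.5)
The statement is informal and rests on citations, so the plan is to make it precise and then give the standard reduction from Boolean matrix multiplication to CFL reachability. The model is the classical reduction from combinatorial \BMM{} to Dyck/CFL reachability (cf.\ \cite{chatterjee2018dyck,koutrisdeep2023fine}). I would fix the statement as follows: if all-pairs (or even single-pair) reachability for a fixed grammar, say $\Dyck_2$ or a fixed CNF grammar, could be decided combinatorially in $O(n^{3-\eps})$ time on $n$-vertex graphs, then combinatorial \BMM{} on $n\times n$ matrices would be solvable in $O(n^{3-\eps'})$ time. That would contradict the combinatorial \BMM{} conjecture.

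\textbf{Step 1 (encode a product).} Given Boolean $X,Y\in\{0,1\}^{n\times n}$, build a layered graph on vertex copies $V_1,V_2,V_3$, each of size $n$. Add an edge $i_1\xrightarrow{a}j_2$ whenever $X[i,j]=1$, and an edge $j_2\xrightarrow{b}k_3$ whenever $Y[j,k]=1$. Use the fixed CNF grammar $S\to AB$, $A\to a$, $B\to b$. Then $(i_1,k_3)\in\calR_S$ iff $(XY)[i,k]=1$. This step is immediate from the definition of $\calR_S$ and Theorem~\ref{thm:sat-correct}-style reasoning about derivations: every accepted path has trace $ab$, so it passes through exactly one middle vertex $j_2$.

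\textbf{Step 2 (size and transfer).} The graph has $3n$ vertices and $O(n^2)$ edges. A truly subcubic all-pairs algorithm therefore computes $XY$ in $O(n^{3-\eps})$ time. The reduction itself is combinatorial and runs in $O(n^2)$, so combinatoriality is preserved.

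\textbf{Step 3 (Dyck variant and single-pair hardness).} To reach Dyck languages, I would replace the labels $a,b$ with a parenthesis gadget: $X$-edges are labeled with an open bracket and $Y$-edges with the matching close bracket. The main work is ensuring that no spurious balanced paths arise, which the layering guarantees since all edges go forward between consecutive layers. For the stronger single-pair or boolean-query variants, I would cite the triangle-detection reductions of \cite{chatterjee2018dyck} rather than reprove them.

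The main obstacle is that the word ``certain'' hides which variants inherit hardness. The statement is therefore best presented as a justified citation for the sparse and single-source cases, with only the dense all-pairs reduction above proved in full. Note that this construction is dense ($m=\Theta(n^2)$), which is consistent with the subcubic results of Theorem~\ref{thm:lin-subcubic}. Those results also use only a linear grammar, whereas $S\to AB$ is not linear.
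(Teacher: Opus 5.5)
Your reduction is correct, and it takes a different route from the paper, whose justification is only a pointer to \cite{chatterjee2018dyck,koutrisdeep2023fine}. The three-layer graph with $S\to AB$, $A\to a$, $B\to b$ does give $(i_1,k_3)\in\calR_S$ iff $(XY)[i,k]=1$ on $3n$ vertices, and the bracket relabeling gives the same for $\Dyck_1$. The $O(n^2)$ construction is combinatorial, so a combinatorial $O(n^{3-\eps})$ all-pairs algorithm would indeed yield combinatorial subcubic \BMM. What this buys over the paper is a self-contained proof of one variant.

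What it does not buy is any context-free content. The language is the finite set $\{ab\}$, and the argument is just the \BMM-hardness of dense all-pairs two-hop reachability, which holds equally for plain transitive closure. The barrier that carries content, and the one the paper needs as a contrast to Theorem~\ref{thm:lin-subcubic}, lives where ordinary reachability is cheap. For instance, \cite{chatterjee2018dyck} states its hardness even for constant-treewidth, hence sparse, graphs. Your construction needs $\Theta(n^2)$ edges, so on that part you are in exactly the paper's position of relying on citation. Before attributing single-pair triangle-detection reductions to \cite{chatterjee2018dyck}, check which query variant and which source problem that work actually uses.

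One remark in your last paragraph is wrong: the non-linearity of $S\to AB$ plays no role. The same language is generated by the TALNF grammar $S\to aB$, $B\to b$, and your construction works verbatim with it. Compatibility with Theorem~\ref{thm:lin-subcubic} comes entirely from density, since $O(|P|mn)=O(n^3)$ when $m=\Theta(n^2)$. Stated correctly, your reduction shows something worth recording. The sparsity hypothesis of Theorem~\ref{thm:lin-subcubic} cannot be dropped for combinatorial algorithms, and \textsc{LinIndex} is conditionally optimal on dense graphs. It cannot, however, separate general CFGs from linear ones; that separation is exactly what the sparse-graph results behind the citations provide.
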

\begin{proof}[Justification]
See the conditional hardness discussions for Dyck reachability in \cite{chatterjee2018dyck} and the fine-grained classifications in \cite{koutrisdeep2023fine}.
\end{proof}

\begin{theorem}[Dynamic reachability barriers]
Under the \OMv conjecture, fully dynamic transitive closure on general directed graphs is believed not to admit simultaneously fast updates and fast queries with polynomial preprocessing.
\end{theorem}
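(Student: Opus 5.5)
This statement is a landscape claim rather than a self-contained theorem. I plan to justify it the way the preceding cited results are justified: by a reduction from a known conditional hardness result, not from first principles. The conjecture I will invoke is the \OMv{} conjecture: no algorithm with polynomial preprocessing answers $n$ online Boolean matrix--vector products for an $n\times n$ matrix in total time $O(n^{3-\eps})$.

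\textbf{Step 1: precise statement.} I will make ``simultaneously fast'' concrete, following the standard formulation of Henzinger, Krinninger, Nanongkai and Saranurak. For any $\eps>0$, no fully dynamic algorithm for $s$-$t$ (hence all-pairs) reachability on general directed graphs has polynomial preprocessing, amortized update time $O(n^{1-\eps})$, and query time $O(n^{2-\eps})$. Equivalently, the product of update and query time cannot be truly subquadratic in this sense.

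\textbf{Step 2: reduction from online vector--matrix--vector (\textsc{OuMv}).} This is the standard consequence of \OMv. Given the matrix $M$, I build a fixed layered graph with left vertices $\ell_1,\dots,\ell_n$, right vertices $r_1,\dots,r_n$, an edge $\ell_i\to r_j$ whenever $M_{ij}=1$, a source $s$, and a sink $t$. For each online pair $(u,v)$, I insert edges $s\to\ell_i$ for each $u_i=1$ and $r_j\to t$ for each $v_j=1$, ask the single query ``$s\leadsto t$?'', and then delete the inserted edges. The answer is exactly $u^{\top}Mv$. Each round costs $O(n)$ updates and one query. Over $n$ rounds the total time is $n\cdot(n\cdot n^{1-\eps}+n^{2-\eps})=O(n^{3-\eps})$, which would refute \OMv. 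Preprocessing is polynomial because the graph has $O(n^2)$ edges.

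\textbf{Step 3: link to this paper.} Plain reachability is CFL reachability for the regular grammar $S\to aS\mid a$ over a one-letter alphabet, and that grammar is linear. So the barrier transfers to dynamic maintenance of \textsc{SatIndex}, \textsc{LinIndex}, and \textsc{LinDistIndex} on general graphs. This is why Table~\ref{tab:tradeoffs} offers dynamic support only for the restricted bidirected $\Dyck_k$ class of Theorem~\ref{thm:dyn-dyck}, where bidirectedness collapses reachability into undirected-style equivalence classes and the reduction breaks.

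\textbf{Main obstacle.} The difficulty is conceptual, not computational: the statement says ``believed'', so it is only as strong as the \OMv{} conjecture. The proof therefore has to state exactly which update/query trade-off is excluded, and must not claim a lower bound for every combination of update and query times. I will also check that amortized rather than worst-case update bounds are covered; the reduction charges total time over all $n$ rounds, so they are.
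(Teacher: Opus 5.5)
Your proposal is correct, and it does more than the paper's justification. The paper only points to the \OMv{} framework of \cite{henzinger2015omv}, with \cite{li2022dynamic} as context, and never fixes what ``simultaneously fast'' means. You instead state the excluded region explicitly: polynomial preprocessing, amortized update $O(n^{1-\eps})$, and query $O(n^{2-\eps})$. You then carry out the standard layered reduction from \textsc{OuMv} yourself, and that reduction is sound:
\begin{itemize}
\item the only possible path shape is $s\to\ell_i\to r_j\to t$, so $s\leadsto t$ holds iff $u^{\top}Mv=1$;
\item each round costs $O(n)$ updates plus one query;
\item $s$-$t$ reachability is a special case of all-pairs reachability, so the bound transfers to transitive closure.
\end{itemize}
Your Step~3 also ties the barrier to \textsc{SatIndex}, \textsc{LinIndex} and \textsc{LinDistIndex}, a link the paper leaves implicit. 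There all edges carry label $a$, the grammar $S\to aS\mid a$ is already in TALNF, and $s\neq t$ makes nonempty-path reachability coincide with reachability. The paper's citation-only route buys brevity and fits the hedged word ``believed''. Yours buys a precise, checkable statement of exactly which trade-off is ruled out.

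One sentence should be corrected. The excluded region is not ``equivalently'' a bound on the \emph{product} of update and query time. What the reduction rules out is $n\cdot u(n)+q(n)=O(n^{2-\eps})$, and the product formulation is incomparable with it. For example, $u=n^{1/2}$, $q=n^{1.9}$ is excluded even though $u\cdot q=n^{2.4}$. Conversely, $u=n^{1.5}$, $q=n^{0.4}$ has subquadratic product but is not excluded by this reduction. Replace that sentence with the sum form, or simply drop it.
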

\begin{proof}[Justification]
This is part of the dynamic lower-bound framework of \cite{henzinger2015omv} and is discussed as context in dynamic Dyck-reachability work such as \cite{li2022dynamic}.
\end{proof}

\section{Empirical Grammar-Class Census}
\label{sec:empirical}

The subcubic bound of Theorem~\ref{thm:lin-subcubic} applies only when the constraining grammar is linear.
A natural question is how often this condition holds for grammars arising in practice.
We answer this by converting a large corpus of real-world JSON schemas into context-free grammars and classifying each as linear or general.

\subsection{Dataset and conversion procedure}

We use JSONSchemaBench~\cite{jsonschemabench2025}, a publicly available collection of $9{,}558$ JSON schemas drawn from GitHub repositories, the JSON Schema Store, Kubernetes configurations, Snowplow event definitions, and Washington Post content models.
The benchmark spans ten sub-datasets of varying structural complexity and is the same corpus used to evaluate constrained-decoding frameworks such as XGrammar, Outlines, and llama.cpp~\cite{xgrammar2024}.

Each JSON schema is converted to a context-free grammar by a deterministic structural mapping.
Object types with fixed properties are modeled as \emph{chained} productions (each carrying at most one nonterminal continuation), preserving linearity.
Array types with a variable-length \texttt{items} constraint produce a self-referential repetition rule with two nonterminal symbols on the right-hand side, which is inherently non-linear.
Union types (\texttt{oneOf}, \texttt{anyOf}) and recursive references (\texttt{\$ref}) are handled by fresh nonterminals and alternatives.
The conversion is purely structural: it faithfully represents the branching and nesting dictated by the schema without introducing or suppressing any grammatical complexity.

\subsection{Results}

Table~\ref{tab:class-dist} summarizes the grammar-class distribution across the three JSONSchemaBench splits (train, validation, test).

\begin{table}[H]
\centering
\small
\begin{tabular}{@{}l r r r r r r@{}}
\toprule
Split & Total & Linear & General & \% Linear & Avg $|P|$ & Avg $|N|$ \\
\midrule
Train & 5754 & 463 & 5291 & 8.0 & --- & --- \\
Validation & 937 & 86 & 851 & 9.2 & --- & --- \\
Test & 2867 & 252 & 2615 & 8.8 & --- & --- \\
\midrule
\textbf{Total} & \textbf{9558} & \textbf{801} & \textbf{8757} & \textbf{8.4} & --- & --- \\
\bottomrule
\end{tabular}
\caption{Grammar-class distribution over JSONSchemaBench. A schema is classified as \emph{linear} if every production in its derived CFG has at most one nonterminal on the right-hand side.}
\label{tab:class-dist}
\end{table}

Table~\ref{tab:class-dist} is the headline statistic for how often the linear restriction holds in practice; it motivates why the linear+sparse subcubic index is relevant for a nontrivial subset of schemas.
\FloatBarrier

\begin{table}[H]
\centering
\small
\begin{tabular}{@{}l r r@{}}
\toprule
Grammar Class & Count (\%) & Avg $|P|$ \\
\midrule
Linear & 801\phantom{0} (8.4\%) & 7.5 \\
General CFG & 8757 (91.6\%) & 93.0 \\
\bottomrule
\end{tabular}
\caption{Summary of grammar class vs.\ production count. Linear grammars are an order of magnitude smaller.}
\label{tab:class-summary}
\end{table}

Table~\ref{tab:class-summary} complements Table~\ref{tab:class-dist} by showing that when a schema \emph{is} linear, its grammar size is substantially smaller, which helps explain the speedups observed for \textsc{LinIndex}.
\FloatBarrier

The production-size distributions are heavily right-skewed: across all $9{,}558$ schemas, $|P|$ has a median of $20$, a mean of $85.8$, a 95th percentile of $338$, and a maximum of $28{,}422$; $|N|$ follows a similar pattern with median $18$, mean $61.2$, 95th percentile $229$, and maximum $11{,}868$.
Linear schemas are substantially smaller (average schema size $2{,}162$ bytes vs.\ $6{,}197$ bytes for general schemas), consistent with the observation that fixed-property, non-array structures tend to be simpler.

\subsection{Sources of non-linearity}

Among the $8{,}757$ general (non-linear) grammars, the dominant sources of non-linearity are:

\begin{table}[H]
\centering
\small
\begin{tabular}{@{}l r r@{}}
\toprule
Feature present & Count & \% of non-linear \\
\midrule
Variable-length array (\texttt{items}) & 3885 & 44.4 \\
Nested object & 3809 & 43.5 \\
Recursive reference (\texttt{\$ref}) & 282 & 3.2 \\
\bottomrule
\end{tabular}
\caption{Structural features present in non-linear grammars. Categories are not mutually exclusive.}
\label{tab:nonlinear-sources}
\end{table}

Table~\ref{tab:nonlinear-sources} explains \emph{why} most schemas fall outside the linear class, isolating variable-length arrays as the dominant source of non-linearity.
\FloatBarrier

Variable-length arrays are the single largest contributor: the repetition rule \mbox{$\mathit{Items}\to\mathit{Item}$} $|$ \mbox{$\mathit{Item}\ \texttt{,}\ \mathit{Items}$} requires two nonterminals and therefore breaks linearity.
Nested objects alone do not cause non-linearity under the chained-property encoding, but they frequently co-occur with array fields.
Recursive references (\texttt{\$ref}) account for only $3.2\%$ of non-linear cases.

Figure~\ref{fig:class-dist} shows the per-dataset distribution of linear versus general grammars, and Figure~\ref{fig:size-vs-class} shows the relationship between raw schema size and grammar class. We insert the figures immediately below for readability.

\begin{figure}[H]
\centering
\includegraphics[width=0.85\textwidth]{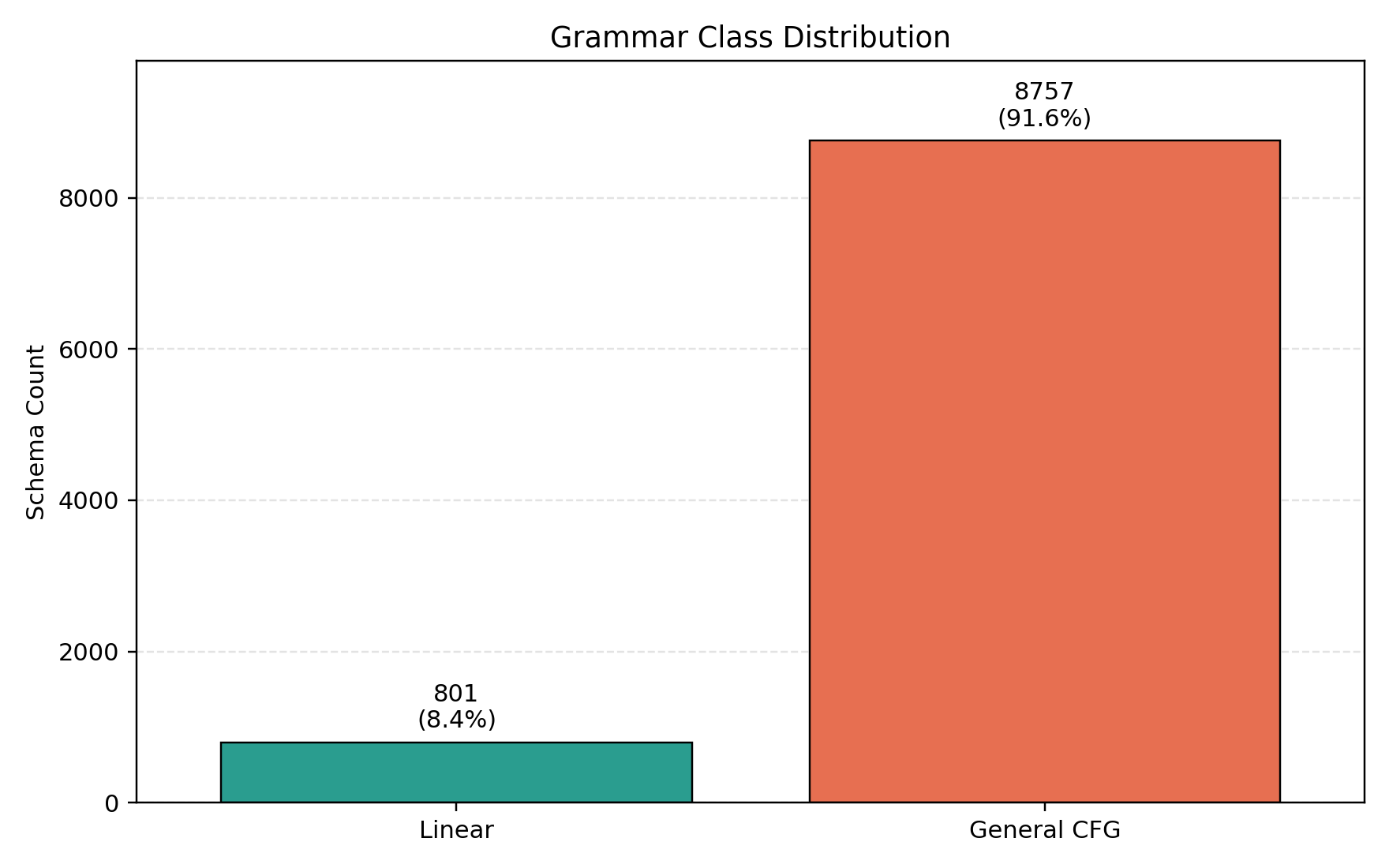}
\caption{Grammar-class distribution across JSONSchemaBench sub-datasets.}
\label{fig:class-dist}
\end{figure}

\FloatBarrier

Figure~\ref{fig:class-dist} makes the linear/non-linear split concrete across sub-datasets: the linear share is consistently single-digit but not vanishing, indicating that linear-aware preprocessing is relevant for a recurring slice of real schemas.

\clearpage

\begin{figure}[H]
\centering
\includegraphics[width=0.85\textwidth]{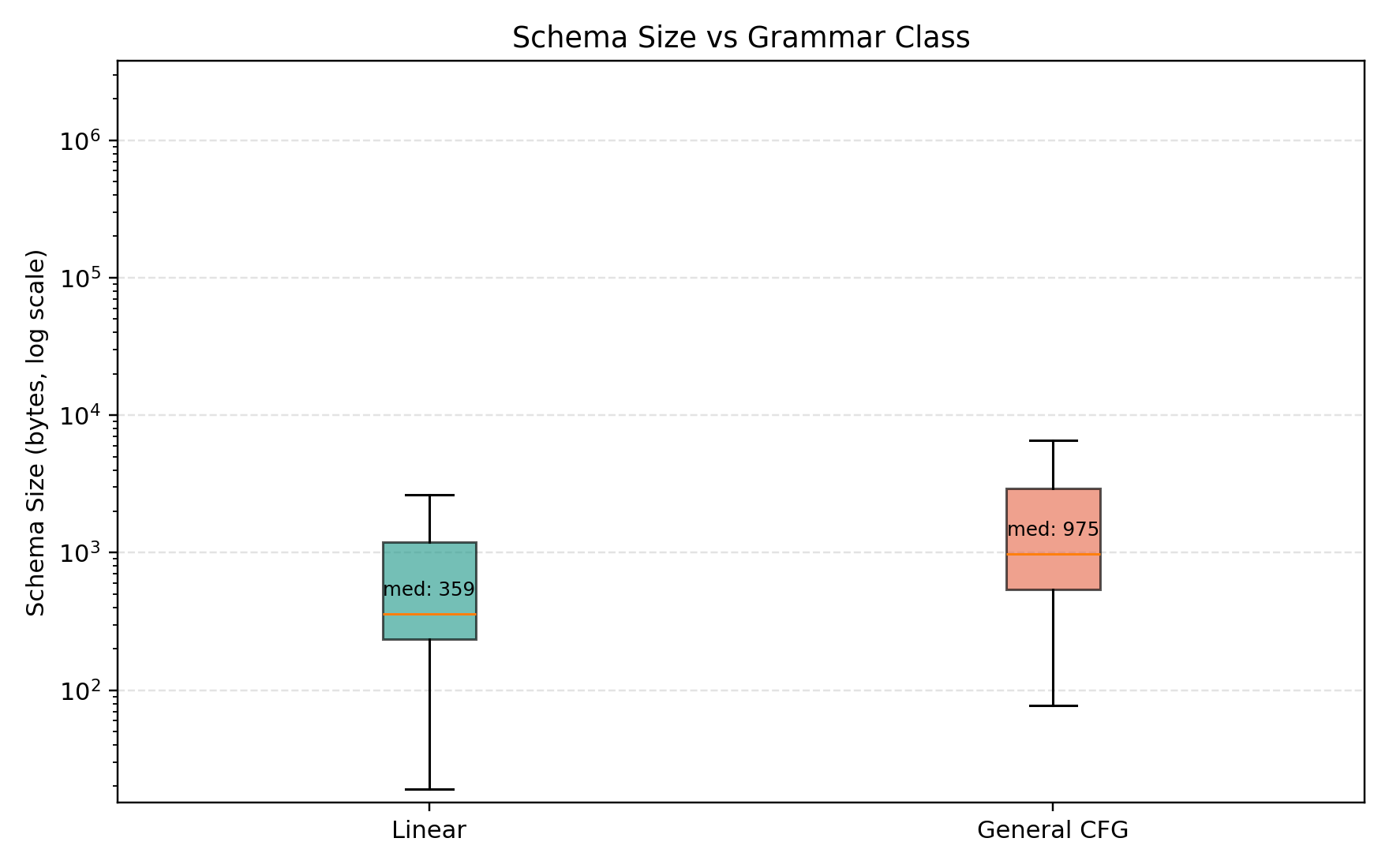}
\caption{Schema byte-size vs.\ grammar class. Linear schemas cluster at smaller sizes.}
\label{fig:size-vs-class}
\end{figure}

Figure~\ref{fig:size-vs-class} visualizes the size gap quantified in Table~\ref{tab:class-summary}: linear schemas cluster at smaller byte sizes, which helps explain why preprocessing and shortest-path decoding are cheaper when the linear restriction holds.
\FloatBarrier

\FloatBarrier

\subsection{Implications for structured decoding}

The class split and size gap visualized in Figures~\ref{fig:class-dist}--\ref{fig:size-vs-class} directly inform when grammar-constrained decoding can benefit from a linear+sparse index: the linear cluster is small but structurally coherent and tends to be much smaller in schema size, which favors preprocessing and distance-aware decoding.

The $8.4\%$ linear share may appear small, but it represents a structurally well-defined and practically meaningful slice of the schema landscape:
schemas that describe flat records, configuration entries, fixed-field API responses, and similar structures where no variable-length list is present.
For these schemas, the subcubic preprocessing of \textsc{LinIndex} (Theorem~\ref{thm:lin-subcubic}) and the distance-aware \textsc{LinDistIndex} (Theorem~\ref{thm:lindist}) apply directly.

For the remaining $91.6\%$, the cubic baseline (\textsc{SatIndex}) governs worst-case preprocessing.
However, the empirical data reveal that non-linearity is overwhelmingly caused by a single, well-understood pattern (variable-length arrays), raising the question of whether a targeted relaxation---for instance, bounding maximum array length or treating bounded repetition as a finite unrolling---could bring a larger fraction of practical schemas into a subcubic-amenable class.
We leave the formal development of such \emph{bounded-repetition} extensions to future work.


\section{Decomposition Diagram}

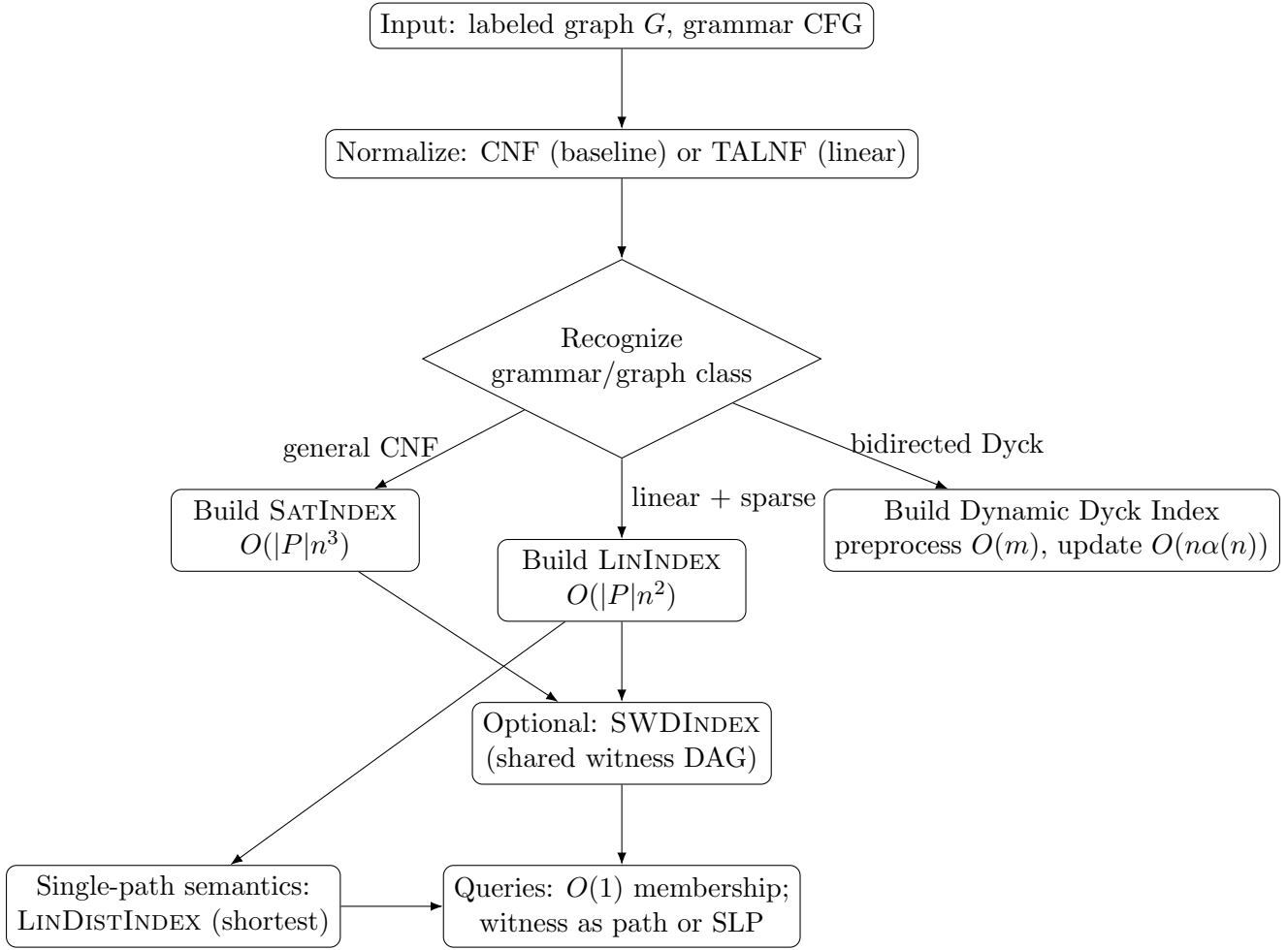
\begin{figure}[H]
\centering
\begin{tikzpicture}[
  node distance=1.1cm and 1.4cm,
  box/.style={rectangle,draw,rounded corners,align=center,inner sep=4pt,minimum width=3.4cm},
  decision/.style={diamond,draw,align=center,inner sep=1pt,aspect=2},
  >=Latex
]
\node[box] (A) {Input: labeled graph $G$, grammar CFG};
\node[box,below=of A] (B) {Normalize: CNF (baseline) or TALNF (linear)};
\node[decision,below=of B] (C) {Recognize\\grammar/graph class};
\node[box,below left=of C] (D) {Build \textsc{SatIndex}\\$O(|P|n^3)$};
\node[box,below=of C] (E) {Build \textsc{LinIndex}\\$O(|P|n^2)$};
\node[box,below right=of C] (F) {Build Dynamic Dyck Index\\preprocess $O(m)$, update $O(n\alpha(n))$};
\node[box,below=of E] (G) {Optional: \textsc{SWDIndex}\\(shared witness DAG)};
\node[box,below=of G] (H) {Queries: $O(1)$ membership;\\witness as path or SLP};
\node[box,left=of H] (I) {Single-path semantics:\\\textsc{LinDistIndex} (shortest)};

\draw[->] (A) -- (B);
\draw[->] (B) -- (C);
\draw[->] (C) -- node[left] {general CNF} (D);
\draw[->] (C) -- node[right] {linear + sparse} (E);
\draw[->] (C) -- node[right] {bidirected Dyck} (F);
\draw[->] (D) -- (G);
\draw[->] (E) -- (G);
\draw[->] (G) -- (H);
\draw[->] (E) -- (I);
\draw[->] (I) -- (H);
\end{tikzpicture}
\caption{Decomposition pipeline diagram.}
\label{fig:mermaid}
\end{figure}

Figure~\ref{fig:mermaid} shows that the input (labeled graph and CFG) is first normalized to CNF or TALNF, and then the appropriate index is selected based on the grammar/graph class. For general CNF, the \textsc{SatIndex} branch is used; for linear+sparse instances, \textsc{LinIndex} with optional \textsc{SWDIndex} compression applies; and for bidirected Dyck instances, the dynamic index branch is available. The single-path (shortest accepted path) semantics are modeled as a separate branch via \textsc{LinDistIndex}, and all branches converge at the query/witness interface.

\section{Conclusion}

We present a tightened, submission-ready treatment of validation-gated reachability (CFL reachability / context-free path queries):
(1) a baseline CNF saturation index with explicit witnesses,
(2) a provably subcubic preprocessing class (linear grammars on sparse graphs) with an explicit propagation proof,
(3) an alternative indexing trade-off via shared witness DAG compression and an optional dynamic restricted-class index,
(4) a formal application shift to single-path (shortest-witness) semantics with a distance-aware linear+sparse index,
and (5) an empirical grammar-class census over $9{,}558$ real-world JSON schemas showing that $8.4\%$ fall into the linear class, with non-linearity predominantly caused by variable-length array types.
The empirical analysis both validates the practical relevance of the linear subclass and identifies bounded-repetition relaxation as a concrete direction for extending subcubic coverage.

\bibliographystyle{plain}

\begin{thebibliography}{99}

\bibitem{yannakakis1990}
M.~Yannakakis.
\newblock Graph-theoretic methods in database theory.
\newblock In {\em Proceedings of the Ninth ACM SIGACT-SIGMOD-SIGART Symposium on Principles of Database Systems (PODS '90)}, pages 230--242, 1990.
\newblock \textsc{doi}: \href{https://doi.org/10.1145/298514.298576}{10.1145/298514.298576}.

\bibitem{reps1998}
T.~W.~Reps.
\newblock Program analysis via graph reachability.
\newblock {\em Information and Software Technology}, 40(11--12):701--726, 1998.
\newblock \textsc{doi}: \href{https://doi.org/10.1016/S0950-5849(98)00093-7}{10.1016/S0950-5849(98)00093-7}.

\bibitem{pavlogiannis2022}
A.~Pavlogiannis.
\newblock {CFL/Dyck} reachability: An algorithmic perspective.
\newblock {\em ACM SIGLOG News}, 9(4):5--25, 2022.
\newblock \textsc{doi}: \href{https://doi.org/10.1145/3583660.3583664}{10.1145/3583660.3583664}.

\bibitem{hellings2014ccfpq}
J.~Hellings.
\newblock Conjunctive context-free path queries.
\newblock In {\em Proceedings of the 17th International Conference on Database Theory (ICDT 2014)}, pages 119--130, 2014.

\bibitem{hellings2015paths}
J.~Hellings.
\newblock Querying for paths in graphs using context-free path queries.
\newblock \href{https://arxiv.org/abs/1502.02242}{arXiv:1502.02242}, v2, 2016.

\bibitem{lei2024skewed}
Y.~Lei, C.~Bossut, Y.~Sui, and Q.~Zhang.
\newblock Context-free language reachability via skewed tabulation.
\newblock {\em Proc. ACM Program. Lang.}, 8(PLDI), Article 221, 2024.
\newblock \textsc{doi}: \href{https://doi.org/10.1145/3656451}{10.1145/3656451}.

\bibitem{chatterjee2018dyck}
K.~Chatterjee, B.~Choudhary, and A.~Pavlogiannis.
\newblock Optimal {D}yck reachability for data-dependence and alias analysis.
\newblock {\em Proc. ACM Program. Lang.}, 2(POPL), Article 30, 2018.
\newblock \textsc{doi}: \href{https://doi.org/10.1145/3158118}{10.1145/3158118}.

\bibitem{li2022dynamic}
Y.~Li, K.~Satya, and Q.~Zhang.
\newblock Efficient algorithms for dynamic bidirected {D}yck-reachability.
\newblock {\em Proc. ACM Program. Lang.}, 6(POPL), Article 62, 2022.
\newblock \textsc{doi}: \href{https://doi.org/10.1145/3498724}{10.1145/3498724}.

\bibitem{koutrisdeep2023fine}
P.~Koutris and S.~Deep.
\newblock The fine-grained complexity of {CFL} reachability.
\newblock {\em Proc. ACM Program. Lang.}, 7(POPL), Article 59, pages 1713--1739, 2023.
\newblock \textsc{doi}: \href{https://doi.org/10.1145/3571252}{10.1145/3571252}.

\bibitem{shi2022flare}
Q.~Shi, Y.~Wang, P.~Yao, and C.~Zhang.
\newblock Indexing the extended {D}yck-{CFL} reachability for context-sensitive program analysis.
\newblock {\em Proc. ACM Program. Lang.}, 6(OOPSLA2), Article 176, 2022.
\newblock \textsc{doi}: \href{https://doi.org/10.1145/3563339}{10.1145/3563339}.

\bibitem{chistikov2022certificates}
D.~Chistikov, R.~Majumdar, and P.~Schepper.
\newblock Subcubic certificates for {CFL} reachability.
\newblock {\em Proc. ACM Program. Lang.}, 6(POPL), Article 41, 2022.
\newblock \textsc{doi}: \href{https://doi.org/10.1145/3498702}{10.1145/3498702}.

\bibitem{henzinger2015omv}
M.~Henzinger, S.~Krinninger, D.~Nanongkai, and T.~Saranurak.
\newblock Unifying and strengthening hardness for dynamic problems via the online matrix-vector multiplication conjecture.
\newblock In {\em Proceedings of the 47th Annual ACM Symposium on Theory of Computing (STOC 2015)}, pages 21--30, 2015.
\newblock \textsc{doi}: \href{https://doi.org/10.1145/2746539.2746609}{10.1145/2746539.2746609}.

\bibitem{jsonschemabench2025}
S.~Geng, M.~Josifoski, L.~Perrotta, M.~Lam\'{e}e, S.~Pappas, K.~Oguz, N.~Ruchti, A.~Amayuelas, and R.~West.
\newblock {JSONSchemaBench}: A rigorous benchmark of structured outputs for language models.
\newblock \href{https://arxiv.org/abs/2501.10868}{arXiv:2501.10868}, 2025.

\bibitem{xgrammar2024}
Y.~Dong, C.~F.~Ruan, Y.~Cai, R.~Lai, Z.~Xu, Y.~Zhao, and T.~Chen.
\newblock {XGrammar}: Flexible and efficient structured generation engine for large language models.
\newblock \href{https://arxiv.org/abs/2411.15100}{arXiv:2411.15100}, 2024.

\end{thebibliography}

\end{document}